\documentclass[pra,aps,superscriptaddress,twocolumn,nofootinbib,longbibliography,accepted=2024-08-28]{quantumarticle}
\pdfoutput=1

\usepackage{graphicx,color,amsfonts,amsmath,amssymb,amsthm,appendix,bbold,bbm,bm,braket,centernot,comment,enumerate,enumitem,float,hyperref,mathdots,mathtools,pgfplots,soul,subfigure,thmtools}

\usepackage[numbers]{natbib}
\usepackage[T1]{fontenc}

\hypersetup{colorlinks=true,linkcolor=blue,citecolor=blue,urlcolor=blue}

\pgfplotsset{compat=1.18} 


\theoremstyle{definition}
\newtheorem{definition}{Definition}

\newtheorem{theorem}{Theorem}
\newtheorem{prop}{Proposition}
\newtheorem{corollary}{Corollary}



\def\H{ {\cal H} }

\def\>{\rangle}
\def\<{\langle}

\newcommand{\expval}[1]{\left\langle #1\right\rangle}

\newcommand{\ketbra}[2]{\ensuremath{\left| #1 \vphantom{#2} \right\rangle\!\!\left\langle #2 \vphantom{#1} \right|}}
\newcommand{\tr}[1]{\mathrm{Tr}\left( #1 \right)}

\newcommand{\iden}{\mathbbm{1}}

\newcommand{\Uuparrow}{\mathrel{\rotatebox[origin=c]{90}{$\Rrightarrow$}}}
\newcommand{\Ddownarrow}{\mathrel{\rotatebox[origin=c]{270}{$\Rrightarrow$}}}
\newcommand{\ssUuparrow}{\mathrel{\scalebox{.5}{\rotatebox[origin=c]{90}{$\Rrightarrow$}}}}
\newcommand{\sUuparrow}{\mathrel{\scalebox{.75}{\rotatebox[origin=c]{90}{$\Rrightarrow$}}}}

\newcommand{\ssDdownarrow}{\mathrel{\scalebox{.5}{\rotatebox[origin=c]{270}{$\Rrightarrow$}}}}
\newcommand{\sDdownarrow}{\mathrel{\scalebox{.75}{\rotatebox[origin=c]{270}{$\Rrightarrow$}}}}

\renewcommand{\v}[1]{\ensuremath{\boldsymbol #1}}

\newcommand{\twoheaduparrow}{\mathrel{\rotatebox[origin=c]{-90}{$\twoheadleftarrow$}}}

\newcommand{\twoheaddownarrow}{\mathrel{\rotatebox[origin=c]{90}{$\twoheadleftarrow$}}}

\usepackage{titlesec}



\begin{document}

\title{Perfect quantum protractors}

\author{Micha\l{} Piotrak}
\affiliation{Department of Physics, Imperial College London, London SW7 2AZ, United Kingdom}
\affiliation{Department of Physics and Astronomy, University College London, London WC1E 6BT, United Kingdom}

\author{Marek Kopciuch}
\affiliation{Doctoral School of Exact and Natural Sciences, Jagiellonian University, Faculty of Physics, Astronomy and Applied Computer Sciences, \\30-348 Kraków, Poland}
\affiliation{Faculty of Physics, Astronomy and Applied Computer Science, Jagiellonian University, 30-348 Kraków, Poland}

\author{Arash Dezhang Fard}
\affiliation{Doctoral School of Exact and Natural Sciences, Jagiellonian University, Faculty of Physics, Astronomy and Applied Computer Sciences, \\30-348 Kraków, Poland}
\affiliation{Faculty of Physics, Astronomy and Applied Computer Science, Jagiellonian University, 30-348 Kraków, Poland}

\author{Magdalena Smolis}
\affiliation{Faculty of Physics, Astronomy and Applied Computer Science, Jagiellonian University, 30-348 Kraków, Poland}

\author{Szymon Pustelny}
\affiliation{Faculty of Physics, Astronomy and Applied Computer Science, Jagiellonian University, 30-348 Kraków, Poland}

\author{Kamil Korzekwa}
\affiliation{Faculty of Physics, Astronomy and Applied Computer Science, Jagiellonian University, 30-348 Kraków, Poland}

\begin{abstract}

In this paper we introduce and investigate the concept of a \emph{perfect quantum protractor}, a pure quantum state $\ket{\psi}\in\H$ that generates three different orthogonal bases of $\H$ under rotations around each of the three perpendicular axes. Such states can be understood as pure states of maximal uncertainty with regards to the three components of the angular momentum operator, as we prove that they maximise various entropic and variance-based measures of such uncertainty. We argue that perfect quantum protractors can only exist for systems with a well-defined total angular momentum $j$, and we prove that they do not exist for $j\in\{1/2,2,5/2\}$, but they do exist for $j\in\{1,3/2,3\}$ (with numerical evidence for their existence when $j=7/2$). We also explain that perfect quantum protractors form an optimal resource for a metrological task of estimating the angle of rotation around (or the strength of magnetic field along) one of the three perpendicular axes, when the axis is not \emph{a priori} known. Finally, we demonstrate this metrological utility by performing an experiment with warm atomic vapours of rubidium-87, where we prepare a perfect quantum protractor for a spin-1 system, let it precess around $x$, $y$ or $z$ axis, and then employ it to optimally estimate the rotation angle. 

\end{abstract}

 \maketitle


\section{Introduction}

Uncertainty relations are one of the most well-known aspects of quantum mechanics. Despite being originally formulated more than a century ago~\cite{heisenberg1927anschaulichen,robertson1929uncertainty}, they are still an active field of research~\cite{busch2014colloquium,coles2017entropic}. This counter-intuitive feature of nature that knowing one property of a quantum system with certainty necessarily means that we must be uncertain about another complementary property is usually considered as a fundamental metrological restriction. Thus, we can never measure perfectly conjugate variables, such as position and momentum, time and energy, or angular position and angular momentum. 

However, one can also look at this from a different angle and see quantum uncertainty not as a metrological restriction, but rather as a metrological resource. First, note that the operators corresponding to one of the pair of conjugate variables generate change of the other variable: momentum operator is a generator of translations in space, energy operator (Hamiltonian) is a generator of time evolution, and angular momentum operators generate rotations in space. Then, it is clear that the eigenstates of a given generator (i.e., states with a well-defined value and no uncertainty) are useless for measuring the conjugate variable. For example, energy eigenstates are known to be stationary and do not evolve in time -- as such, they are useless for measuring time. On the other hand, a quantum system prepared in a superposition of different energy eigenstates (so with some energy uncertainty) can act as a clock, since it can non-trivially evolve in time to another distinguishable state~\cite{mandelstam1991uncertainty}.

This concept was formalised by introducing the notion of \emph{quantum clocks}~\cite{peres1980measurement}, which was then generalised in Ref.~\cite{buzek1999optimal}. The basic idea is that a clock can be realised by a finite-dimensional quantum system, whose ``ticks'' at discrete times correspond to evolutions into different mutually orthogonal states. Projective measurement in the basis formed by these states allows for accurate, although discrete, measurements of time. The maximal number of different moments in time that can be then perfectly distinguished by a $d$-dimensional quantum clock is given by~$d$, and this can happen only for the clock initialised in an optimal state of uniform superposition of energy eigenstates (i.e., for the state maximising energy uncertainty).

In this paper, we generalise the original construction of quantum clocks to quantum systems being able to optimally measure more than one variable. More precisely, we introduce and investigate the notion of \emph{perfect quantum protractors} -- $d$-dimensional quantum states that can be used to perfectly distinguish $d$ rotation angles around any of the three perpendicular axes. In other words, under rotations around $x$, $y$ and $z$ axes these states generate three different orthogonal bases, and they also simultaneously maximise the uncertainty of the three components of angular momentum. We show that such optimal states can only exist for systems with a well-defined angular momentum $j$ and, surprisingly, we prove that their existence depends on $j$. We analyse in detail small-dimensional cases of $j<4$ and explain metrological setups for which perfect quantum protractors are optimal resources. Finally, we perform an experiment with atomic vapours of rubidium-87 playing the role of spin-1 perfect quantum protractors employed to optimally estimate the rotation angle.

The paper is structured as follows. First, in Sec.~\ref{sec:setting}, we set the scene by recalling the mathematical formalism of angular momentum and formally defining the central concepts of optimal and perfect quantum protractors. Next, in Sec.~\ref{sec:existence}, we present results on the existence and non-existence of these states for spin-$j$ systems. We then proceed to discussing the properties of perfect quantum protractors in Sec.~\ref{sec:properties}, focusing mainly on the fact that they are pure states that maximise uncertainty about angular momentum measurements, and describing their entanglement features. Next, Sec.~\ref{sec:metrology} is devoted to justifying the utility of perfect quantum protractors for metrological tasks of measuring rotation angles. In Sec.~\ref{sec:experiment} we present the experiment we performed with rubidium vapour, where we prepared perfect quantum protractor states of spin-1 systems and used them to optimally estimate rotation angles. Finally, we conclude the paper and provide outlook for future work in Sec.~\ref{sec:conclusions}.


\section{Setting the scene}
\label{sec:setting}


\subsection{Mathematical background}
\label{sec:math}

Rotations in three dimensions are generally described by a Lie group SO(3), i.e., a special orthogonal group of dimension $3$. Orthogonality ensures that the transformation preserves the inner product, hence it is an isometry. Speciality implies that the parity is conserved as well, hence the group is restricted to proper rotations. The group SO(3) is double-covered by SU(2), the special unitary group of dimension 2, so equivalently rotations can be described by elements of SU(2). This is very convenient when considering quantum particles with angular momentum and will be used throughout this paper.

The Lie algebra $\mathfrak{su(2)}$ generating SU(2) is defined by the following commutation relation between its three generators:
\begin{equation}
    \label{eq:lie-algebra}
    [{J}_k,{J}_l]=i\epsilon_{klm}{J}_m,
\end{equation}
where $\epsilon_{klm}$ is the Levi-Civita tensor. Since SU(2) is a 3-parameter Lie group, each of its elements can be formed from the elements of $\mathfrak{su(2)}$ via exponentiation:
\begin{equation}
    \forall \ {R(\pmb{\theta})\in \text{SU(2)}}:\quad R(\pmb{\theta})=e^{-i\pmb{\theta}\cdot\pmb{{J}}},
\end{equation}
where $\pmb{\theta}=(\theta_x,\theta_y,\theta_z)$ is a vector of real numbers and $\pmb{{J}}=({J}_x,{J}_y,{J}_z)$. To denote rotations around $x$, $y$ and~$z$ axes we will use a shorthand notation of $R_x$, $R_y$ and $R_z$.

Matrix representations of $\mathfrak{su(2)}$ of size $d$ are associated with angular momentum operators along different axes: ${J}_x$, ${J}_y$ and ${J}_z$ correspond to angular momenta along three perpendicular axes, and so $\hat{\v{n}}\cdot \pmb{{J}}$, with $\hat{\v{n}}$ denoting a unit vector, represents angular momentum along axis $\hat{\v{n}}$\footnote{Throughout the paper we will slightly abuse the notation with $R(\v{\theta})$ and ${J}_k$ denoting both elements of the Lie group and algebra, as well as their matrix representations of size $d$.}. In particular, irreducible representations of size $(2j+1)$ correspond to spin-$j$ systems, whose Hilbert space we will denote by $\H_j$, and for which one has
\begin{equation}
    J^2:=J_x^2+J_y^2+J_z^2=j(j+1).    
\end{equation}
In the irreducible case, the eigenstates of $\hat{\v{n}}\cdot \pmb{{J}}$ form a complete and orthonormal basis of $\H_j$, so any state with a total angular momentum $j$ can be expressed as:
\begin{equation}
    \label{eq:irrep_decomp}
    \forall \ {\ket{\psi} \in \H_j}:\quad \ket{\psi}=\displaystyle \sum_{m=-j}^{j}c_{\hat{\v{n}},m}\ket{j,m}_{\hat{\v{n}}}, 
\end{equation}
where $\ket{j,m}_{\hat{\v{n}}}$ is the eigenstates of $\hat{\v{n}}\cdot \pmb{{J}}$ with eigenvalue~$m$. 

Now, a state $\ket{\psi}$ of a quantum system transforms under the rotation described by $R(\pmb{\theta})$ as follows:
\begin{equation}
     \ket{\psi}\xrightarrow{R(\pmb{\theta})} e^{-i{{\pmb{\theta}\cdot\v{J}}}}\ket{\psi}=e^{-i\theta\hat{\pmb{n}}\cdot{\pmb{J}}}\ket{\psi},
\end{equation}
where we decomposed $\pmb{\theta}$ into a unit vector times its magnitude, $\pmb{\theta}=\theta \hat{\pmb{n}}$. In particular, for spin-$j$ systems, we can use the decomposition from Eq.~\eqref{eq:irrep_decomp} to see that the rotation by angle $\theta$ around axis $\hat{\pmb{n}}$ transforms the state of the system as follows:
\begin{align}
    \label{eq:rot}
   \!\!\! \ket{\psi}=\!\!\!\sum_{m=-j}^{j}\! c_{\hat{\v{n}},m}\ket{j,m}_{\hat{\pmb{n}}}\xrightarrow{R(\theta\hat{\v{n}})}\!\! \sum_{m=-j}^{j}\! c_{\hat{\v{n}},m}e^{-i m \theta}\ket{j,m}_{\hat{\pmb{n}}}\!.\!
\end{align}
For a general quantum system, the corresponding Hilbert space $\H$ can be uniquely decomposed into a direct sum of Hilbert spaces corresponding to irreducible representations of SU(2) as follows:
\begin{equation}
    \label{eq:decomposition}
    \H=\bigoplus_{j}\H_j^{\otimes g_j},
\end{equation}
with $j$ taking integer or half-integer values and $g_j$ denoting the degeneracy of a subspaces $\H_j$. For example, a system consisting of four spin-$1/2$ particles can be decomposed as:
\begin{align}
    (\H_{1/2})^{\otimes 4}&=\H_0^{\otimes 2}\oplus\H_1^{\otimes 3}\oplus \H_2.
\end{align}
Thus, an arbitrary quantum state $\ket{\Psi}$ living in a general Hilbert space $\H$ can be expressed as:
\begin{equation}
    \label{general}
    \ket{\Psi}=\sum_{j=0}^{j_{\mathrm{max}}} \sum_{m=-j}^j \sum_{\alpha=1}^{g_j} c_{\hat{\v{n}},j,m,\alpha} \ket{j,m,\alpha}_{\hat{\v{n}}},
\end{equation}
where $j_{\mathrm{max}}$ is the total angular momentum of the highest irrep appearing in the decomposition of $\H$.


\subsection{Formal statement of the problem}
\label{sec:formal}

In this work we investigate special families of quantum states that we call \emph{quantum protractors}.

\begin{definition}[Quantum protractors]
    A quantum protractor of order $n$ with respect to axis $\hat{\v{n}}$ is a quantum state $\ket{\psi}\in\H$ with $\dim \H =d$ for which there exist angles $\{\theta_1,\dots,~\theta_n\}$ such that
    \begin{equation}
        \forall~1\leq k < l\leq n:\quad \bra{\psi} R^\dagger(\theta_{k}\hat{\v{n}})R(\theta_{l}\hat{\v{n}}) \ket{\psi}=0,
    \end{equation}
    i.e., $\ket{\psi}$ rotated around $\hat{\v{n}}$ by these angles forms a set of $n$ mutually orthogonal states. A quantum protractor of order $d$ is called \emph{optimal}. A state that is an optimal quantum protractor simultaneously with respect to $r$ perpendicular axes is said to be of rank $r$, and a quantum protractor with the maximal rank equal to 3 is called \emph{perfect}.

\end{definition}

Our main goal is to solve the existence problem of perfect quantum protractors in Hilbert spaces of various dimensions, capture their properties and understand the consequences of their (non-)existence.


\section{Existence}
\label{sec:existence}

We first observe that in our search for optimal quantum protractors, we only need to consider states living in Hilbert spaces $\mathcal{H}_j$, i.e., states of quantum systems with well-defined total angular momentum $j$. To see this, first consider an arbitrary state $\ket{\Psi}\in\H$ of a general system given by Eq.~\eqref{general}. The overlap between $\ket{\Psi}$ rotated by angle $\theta_k$ about axis $\hat{\v{n}}$ and $\ket{\Psi}$ rotated by angle $\theta_l=\theta_k-\theta$ about the same axis is given by:
\begin{align}
    \!\!\!\bra{\Psi} e^{-i\theta\hat{\pmb{n}}\cdot{\pmb{J}}}\ket{\Psi}&=\displaystyle\sum_{j=0}^{j_{\mathrm{max}}}\sum_{m=-j}^j \sum_{\alpha=1}^{g_j} |c_{\hat{\v{n}},j,m,\alpha}|^2e^{-im\theta}\nonumber\\
    \!\!\!&=\sum_{m=-j_{\mathrm{max}}}^{j_{\mathrm{max}}}\!\!\! e^{-im\theta}\sum_{j=|m|}^{j_{\mathrm{max}}}\sum_{\alpha=1}^{g_j}|c_{\hat{\v{n}},j,m,\alpha}|^2.
\end{align}
Introducing
\begin{equation}
    p_m:=\sum_{j=|m|}^{j_{\mathrm{max}}}\sum_{\alpha=1}^{g_j}|c_{\hat{\v{n}},j,m,\alpha}|^2,
\end{equation}
so that $\v{p}$ is a normalised probability distribution, we get
\begin{equation}
        \bra{\Psi} e^{-i\theta\hat{\pmb{n}}\cdot{\pmb{J}}}\ket{\Psi}=\sum_{m=-j_{\mathrm{max}}}^{j_{\mathrm{max}}}e^{-im\theta} p_m.
\end{equation}
Exactly the same overlaps would be obtained if instead we rotated a state $\ket{\psi}\in \H_{j_{\mathrm{max}}}$ from Eq.~\eqref{eq:irrep_decomp} with \mbox{$c_{\hat{\v{n}},m}=\sqrt{p_m}$}. This proves that the number of orthogonal states that can be obtained via rotations of a system with a composite Hilbert space $\H$ is bounded by the dimension of the highest irrep $\H_{j_{\mathrm{max}}}$ appearing in the decomposition from Eq.~\eqref{eq:decomposition}. Since for an optimal quantum protractor we require this number to be equal to the dimension of the system, without loss of generality we can limit our considerations to states living in $\H_j$.


\subsection{Optimal quantum protractors of rank 1}
\label{sec:rank1}

We start with a simple observation that characterises all optimal quantum protractors of rank 1 as phase states~\cite{barnett1992limiting}.

\begin{prop}
\label{prop:optimal1}
A state $\ket{\psi}\in\H_j$ is an optimal quantum protractor with respect to axis $\hat{\v{n}}$ if and only if it is of the form
\begin{equation}
    \label{eq:optimal1}
    \ket{\psi}=\frac{1}{\sqrt{2j+1}} \sum_{m=-j}^j e^{i\phi_m} \ket{j,m}_{\hat{\v{n}}}.
\end{equation}
\end{prop}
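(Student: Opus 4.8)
The plan is to work in the eigenbasis of $\hat{\v{n}}\cdot\pmb{J}$, writing $\ket{\psi}=\sum_{m=-j}^{j}c_m\ket{j,m}_{\hat{\v{n}}}$ with $c_m\equiv c_{\hat{\v{n}},m}$, and to reduce the whole equivalence to a single condition on the moduli $|c_m|^2$. First I would use Eq.~\eqref{eq:rot} to compute the overlap between the state rotated by $\theta_k$ and by $\theta_l$, obtaining
\begin{equation}
    \bra{\psi}R^\dagger(\theta_k\hat{\v{n}})R(\theta_l\hat{\v{n}})\ket{\psi}=\sum_{m=-j}^{j}|c_m|^2 e^{-im(\theta_l-\theta_k)},
\end{equation}
so that every overlap is a value of the single trigonometric polynomial $f(\theta)=\sum_m |c_m|^2 e^{-im\theta}$ evaluated at an angle difference, and in particular depends only on the moduli $|c_m|^2$, not on the phases of $c_m$.

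For the ``only if'' direction I would assemble the $d=2j+1$ rotated states into the columns of a $d\times d$ matrix $M$ with entries $M_{ml}=c_m e^{-im\theta_l}$, i.e.\ $M=DV$ with $D=\diag{c_{-j},\dots,c_j}$ and $V_{ml}=e^{-im\theta_l}$. Being an optimal protractor means exactly that these $d$ columns are mutually orthogonal, and since each is a unit vector (rotations are unitary) they are orthonormal; hence $M^\dagger M=\iden$. The key step is then the observation that a square matrix with orthonormal columns is unitary, so $MM^\dagger=\iden$ as well. Reading off the diagonal entries of $MM^\dagger$ gives $|c_m|^2\sum_l 1=|c_m|^2(2j+1)=1$ for every $m$, which forces $|c_m|^2=1/(2j+1)$ and hence $c_m=e^{i\phi_m}/\sqrt{2j+1}$ for arbitrary phases $\phi_m$, exactly the claimed form. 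The merit of this route is that it never requires knowing the orthogonalising angles $\theta_l$ explicitly: the modulus constraint drops out of unitarity for any set of angles that happens to work.

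For the ``if'' direction I would exhibit explicit angles. Taking $\theta_l=2\pi l/(2j+1)$ for $l=0,\dots,2j$ makes $V/\sqrt{2j+1}$ the discrete Fourier transform matrix and $D=\diag{e^{i\phi_{-j}},\dots,e^{i\phi_j}}/\sqrt{2j+1}$ a diagonal unitary, so $M^\dagger M=\iden$ and the rotated states are orthonormal; equivalently, I would sum the geometric series to obtain the Dirichlet kernel $f(\theta)=\frac{1}{2j+1}\frac{\sin\!\big((2j+1)\theta/2\big)}{\sin(\theta/2)}$, which vanishes precisely at the nonzero differences $\theta_l-\theta_k=2\pi(l-k)/(2j+1)$. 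Since $f$ depends only on the moduli, this works for any choice of phases, completing the equivalence.

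I expect the only genuinely load-bearing idea to be the square-matrix unitarity argument in the ``only if'' direction: it converts the column-orthonormality that is handed to us into row-orthonormality, and thus the uniform-modulus condition, with no case analysis and no control over the unknown angles. The remaining pieces---the overlap computation, the Dirichlet-kernel evaluation, and the remark that $d$ orthonormal vectors in a $d$-dimensional space automatically form a basis---are routine.
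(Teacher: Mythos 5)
Your proof is correct, and while the ``if'' direction coincides with the paper's (the same angles $\theta_l=2\pi l/(2j+1)$ and the same Fourier-orthogonality computation), your ``only if'' direction takes a genuinely different route. The paper argues that rotations about $\hat{\v{n}}$ only introduce phases and hence preserve the probabilities $p_m=|c_m|^2$, so an optimal protractor requires $2j+1$ mutually orthogonal states all sharing the same distribution $\v{p}$; it then invokes Proposition~4 of Ref.~\cite{korzekwa2019distinguishing}, which says this is possible only if $\max_m p_m\leq 1/(2j+1)$, and normalisation then forces all $p_m$ to equal $1/(2j+1)$. You instead assemble the $2j+1$ rotated states into the columns of the square matrix $M=DV$, note that column orthonormality gives $M^\dagger M=\iden$, and use the elementary fact that a square matrix with orthonormal columns is unitary, so $MM^\dagger=\iden$ as well; reading off the diagonal, the $m$-th row has squared norm $(2j+1)|c_m|^2=1$, which yields the uniform moduli directly. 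This replaces the paper's appeal to an external result with a short, self-contained linear-algebra step, and, as you note, it never requires any knowledge of the orthogonalising angles $\theta_l$ — the uniformity drops out for whatever angles happen to work. One cosmetic imprecision: for half-integer $j$ the matrix $V/\sqrt{2j+1}$ with entries $e^{-im\theta_l}$, $m\in\{-j,\dots,j\}$, is not literally the discrete Fourier transform matrix but a diagonal rephasing of it (the row index is shifted by the half-integer $j$); since a diagonal-unitary multiple of a unitary is still unitary, nothing in your argument breaks, but the identification should be stated with that caveat.
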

\begin{proof}
    We first prove by direct calculation that $\ket{\psi}$ from Eq.~\eqref{eq:optimal1} is an optimal quantum protractor. We start by choosing $(2j+1)$ angles $\theta_k=2\pi k/(2j+1)$ for \mbox{$k\in\{0,\dots,2j\}$}. Then, the overlap between $\ket{\psi}$ rotated by angle $\theta_k$ about axis $\hat{\v{n}}$ with $\ket{\psi}$ rotated by angle $\theta_l$ about the same axis is given by:
    \begin{align}
        \bra{\psi} e^{i(\theta_k-\theta_l)\hat{\pmb{n}}\cdot{\pmb{J}}}\ket{\psi}=\frac{1}{2j+1}\sum_{m=-j}^{j}e^{i\frac{2\pi (k-l)m}{2j+1}}=\delta_{kl}.
    \end{align}
    Since there exist $\dim \H_j = (2j+1)$ rotation angles that transform $\ket{\psi}$ into a set of mutually orthogonal states, $\ket{\psi}$ is an optimal quantum protractor.
    
    We will now prove that every optimal quantum protractor must be of the form given in Eq.~\eqref{eq:optimal1}. For $\ket{\psi}$ to be an optimal quantum protractor, its $(2j+1)$ rotated versions must be mutually orthogonal. However, according to Eq.~\eqref{eq:rot}, rotations only introduce phase factors in the relevant angular momentum eigenbasis, and do not modify probabilities. Thus, we require the existence of $(2j+1)$ orthogonal states $\ket{\psi_k}$ with $k\in\{0,\dots 2j\}$ such that 
    \begin{equation}
        \ket{\psi_k}=\sum_{m=-j}^j \sqrt{p_m} e^{i\phi_{k,m}} \ket{j,m}_{\hat{\v{n}}},
    \end{equation}
    with $p_m=|\bra{\psi} j,m\rangle_{\hat{\v{n}}}|^2$. However, Proposition~4 from Ref.~\cite{korzekwa2019distinguishing} clearly states that this is possible only if $\max_m p_m \leq 1/(2j+1)$, which enforces all $p_m$ to be equal to $1/(2j+1)$, which in turn enforces $\ket{\psi}$ to be of the form from Eq.~\eqref{eq:optimal1}.
\end{proof}


\subsection{Optimal quantum protractors of rank 2}
\label{sec:rank2}

We now proceed to proving the existence of optimal quantum protractors of rank 2 in Hilbert spaces $\H_j$ for all $j$. From Proposition~\ref{prop:optimal1}, we know that in order for \mbox{$\ket{\psi}\in \H_j$} to be an optimal quantum protractor with respect to axes $\hat{\v{n}}$ and $\hat{\v{m}}$, it has to satisfy the following:
\begin{align}
    \ket{\psi}&=\frac{1}{\sqrt{2j+1}} \sum_{m=-j}^j e^{i\phi_m} \ket{j,m}_{\hat{\v{n}}} \nonumber\\
    &=  \frac{1}{\sqrt{2j+1}} \sum_{m=-j}^j e^{i\varphi_m} \ket{j,m}_{\hat{\v{m}}}.
\end{align}
In other words, we are looking for a state that is a uniform superposition of all basis states for two different bases, the eigenbases of $\hat{\pmb{n}}\cdot{\pmb{J}}$ and $\hat{\pmb{m}}\cdot{\pmb{J}}$. Luckily for us, this problem was investigated in the past in a different context, and so we can employ the following result from Refs.~\cite{korzekwa2014operational,idel2015sinkhorn}.
\begin{theorem}[Theorem~1 of Ref.~\cite{korzekwa2014operational}]
For any two bases $\{ \ket{a_n} \}$ and $\{ \ket{b_n} \}$ of a $d$-dimensional Hilbert space there exist at least $2^{d-1}$ states $\ket{\psi}$ that are unbiased in both bases, i.e.:
\begin{equation}
    \forall n:\quad |\braket{a_n|\psi}|^2=|\braket{b_n|\psi}|^2=\frac{1}{d}.
\end{equation}
\end{theorem}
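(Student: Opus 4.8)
The plan is to recast the double-unbiasedness conditions as a normal-form problem for the transition matrix between the two bases. First I would introduce the unitary $U$ with entries $U_{mn}=\braket{a_m|b_n}$ and expand a candidate in the first basis, $\ket{\psi}=\sum_m \psi_m\ket{a_m}$. The conditions $|\braket{a_m|\psi}|^2=1/d$ force $\psi_m=v_m/\sqrt{d}$ with $v$ lying on the torus $\mathbb{T}^d=\{v\in\mathbb{C}^d:|v_m|=1\ \forall m\}$, while the amplitudes in the second basis are $\braket{b_n|\psi}=(U^\dagger v)_n/\sqrt{d}$. Hence the remaining conditions $|\braket{b_n|\psi}|^2=1/d$ read $|(U^\dagger v)_n|=1$ for all $n$, and the theorem becomes the purely matrix-analytic claim: for every unitary $U$ there exist at least $2^{d-1}$ points $v\in\mathbb{T}^d$, counted modulo the physically irrelevant global phase $v\mapsto e^{i\chi}v$, such that $U^\dagger v\in\mathbb{T}^d$ as well.

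Next I would identify this with a Sinkhorn-type normal form for a unitary. Writing $v=\diag{v}\,\mathbf{1}$ and $w:=U^\dagger v=\diag{w}\,\mathbf{1}$ with $\mathbf{1}=(1,\dots,1)^T$, a solution is exactly a pair of diagonal unitaries $D_1,D_2$ for which $M:=D_1\,U^\dagger\,D_2$ has the all-ones vector as a fixed point, $M\mathbf{1}=\mathbf{1}$; that is, $U^\dagger$ can be rescaled by phases on the left and right to a unitary with unit row sums, the unitary analogue of Sinkhorn rescaling to constant line sums. For existence I would set up the variational functional $\Phi(v)=\sum_n|(U^\dagger v)_n|$ on the compact torus $\mathbb{T}^d$: since $\sum_n|(U^\dagger v)_n|^2=\norm{v}^2=d$ is constant, Cauchy--Schwarz gives $\Phi(v)\le\sqrt{d}$ with equality precisely when all $|(U^\dagger v)_n|=1$, so any global maximiser attaining $\Phi=\sqrt{d}$ is exactly an unbiased vector. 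The genuine content is therefore that this maximum equals $\sqrt{d}$ rather than something strictly smaller; this is supplied by the Sinkhorn normal-form theorem for unitary matrices (the phase rescalings always exist), which I would invoke as the existence step.

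Finally, to obtain the count $2^{d-1}$ I would pass to the quotient torus $\mathbb{T}^{d-1}=\mathbb{T}^d/U(1)$ and study the map $F:\mathbb{T}^{d-1}\to\mathbb{R}^{d-1}$ whose components are $|(U^\dagger v)_n|^2-1$ (only $d-1$ of which are independent, as they sum to zero), so that unbiased vectors are precisely the zeros of $F$. The plan is to compute the topological degree of $F$ and show it equals $\pm 2^{d-1}$, the factor $2$ per dimension reflecting that each modulus constraint is quadratic in the circular variables; a degree of this magnitude then forces at least $2^{d-1}$ solutions. As a sanity check this reproduces the $d=2$ case, where the two conditions cut out two great circles on the Bloch sphere meeting in exactly $2=2^{d-1}$ antipodal points. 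The hard part will be this last step: making the degree computation rigorous requires handling the redundancy among the $d$ constraints, arguing genericity of $U$, and then a continuity/limiting argument to cover degenerate $U$ where zeros may merge, which is precisely where I would lean on the detailed analysis of Refs.~\cite{korzekwa2014operational,idel2015sinkhorn} rather than redo it by hand.
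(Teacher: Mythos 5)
First, a framing point: the paper does not prove this statement at all --- it imports it verbatim as Theorem~1 of Ref.~\cite{korzekwa2014operational} (citing also Ref.~\cite{idel2015sinkhorn}), and the proof in those references is symplectic-topological. Your opening reformulation matches their setup exactly: states unbiased in $\{\ket{a_n}\}$ form the Clifford torus $L\subset \mathbb{CP}^{d-1}$, states unbiased in $\{\ket{b_n}\}$ form $UL$, and the theorem is a lower bound on $\#(L\cap UL)$. Two smaller issues in your middle step: Cauchy--Schwarz gives $\Phi(v)\le d$, not $\sqrt{d}$ (since $\sum_n|(U^\dagger v)_n|^2=d$); and invoking the Idel--Wolf Sinkhorn theorem for existence is a legitimate citation but not an independent route, because that theorem's proof rests on the very same non-displaceability of the Clifford torus --- and in any case existence alone is weaker than the $2^{d-1}$ count you still owe.

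The genuine gap is the counting step. A topological degree cannot produce the number $2^{d-1}$. Since $\mathbb{T}^{d-1}$ is closed and $\mathbb{R}^{d-1}$ is non-compact, the image of $F$ misses points of $\mathbb{R}^{d-1}$, so the Brouwer degree of $F$ at the value $0$ equals its degree at a value outside the image, namely zero; equivalently, the signed count of transverse solutions is the homological intersection number of two null-homologous Lagrangian tori in $\mathbb{CP}^{d-1}$, which vanishes. Your own sanity check exhibits the failure: for $d=2$ the two great circles on the Bloch sphere meet in two points of \emph{opposite} intersection sign, so the degree is $0$ while the true count is $2$. The number $2^{d-1}$ is not a degree but the total Betti number $\sum_k \binom{d-1}{k}$ of $T^{d-1}$, and the bound $\#(L\cap UL)\ge 2^{d-1}$ (for transverse intersections) is an Arnold-conjecture-type statement; Ref.~\cite{korzekwa2014operational} obtains it from Lagrangian Floer cohomology of the Clifford torus (Cho; Biran--Entov--Polterovich), a theory that counts intersections without the sign cancellations that annihilate the degree. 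No genericity-plus-limiting argument can repair this, because the degree-based method yields $0$ identically; the ``hard part'' you defer to the references is therefore not a technicality but the entire content of the theorem, and it requires machinery of a different kind from what you propose.
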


The above then straightforwardly leads to the following corollary:
\begin{corollary}
    \label{cor:rank2}
    In every Hilbert space $\H_j$, there exist at least $4^{j}$ optimal quantum protractors with respect to any two axes and so, in particular, there exist optimal quantum protractors of rank 2.
\end{corollary}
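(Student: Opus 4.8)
The plan is to recognise that the rank-2 optimal quantum protractors in $\H_j$ are exactly the states that are mutually unbiased with respect to the two angular-momentum eigenbases, and then to read off the count directly from the Theorem quoted above. First I would set $d:=\dim\H_j=2j+1$ and recall from Proposition~\ref{prop:optimal1} that $\ket{\psi}$ is an optimal quantum protractor with respect to an axis $\hat{\v{n}}$ if and only if it has uniform magnitude in the eigenbasis of $\hat{\v{n}}\cdot\pmb{J}$, i.e.\ $|\bra{\psi}j,m\rangle_{\hat{\v{n}}}|^2=1/(2j+1)$ for every $m$.

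The crucial observation is that this flatness condition is precisely the unbiasedness condition $|\braket{a_n|\psi}|^2=1/d$ appearing in the Theorem. I would therefore identify the two bases $\{\ket{a_n}\}$ and $\{\ket{b_n}\}$ with the eigenbases of $\hat{\v{n}}\cdot\pmb{J}$ and $\hat{\v{m}}\cdot\pmb{J}$, respectively. Applying Proposition~\ref{prop:optimal1} to each axis separately, a state is simultaneously an optimal quantum protractor with respect to both $\hat{\v{n}}$ and $\hat{\v{m}}$ if and only if it is unbiased with respect to both of these bases.

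Invoking the Theorem with $d=2j+1$ then guarantees at least $2^{d-1}=2^{2j}=4^{j}$ states unbiased in both bases, each of which is, by the equivalence above, an optimal quantum protractor with respect to the two axes. Choosing $\hat{\v{n}}$ and $\hat{\v{m}}$ to be perpendicular yields rank-2 states, and since $4^{j}\ge 1$ this in particular establishes their existence. There is no genuine obstacle here, as the statement is a direct corollary; the only points to verify are that the unbiasedness hypothesis of the Theorem coincides with the phase-state characterisation of Proposition~\ref{prop:optimal1} for both axes at once, and that the exponent $d-1=2j$ produces the claimed count $4^{j}$. I would also remark that, because the Theorem holds for an arbitrary pair of bases, the bound is valid for any two (not necessarily perpendicular) axes, which is exactly how the corollary is phrased.
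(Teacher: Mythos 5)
Your proposal is correct and follows essentially the same route as the paper: it identifies the flatness condition of Proposition~\ref{prop:optimal1} with the unbiasedness hypothesis of the quoted Theorem, applies it to the two angular-momentum eigenbases with $d=2j+1$ to obtain $2^{2j}=4^{j}$ states, and specialises to perpendicular axes for the rank-2 claim. This matches the paper's (deliberately brief) derivation, including the observation that the count holds for arbitrary, not necessarily perpendicular, axes.
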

We will provide examples of optimal quantum protractors of rank 2 in the next section while discussing the existence of perfect quantum protractors.


\subsection{Perfect quantum protractors}
\label{sec:analysis}

For a quantum protractor to have rank $3$ it must satisfy the condition specified in Proposition~\ref{prop:optimal1} for each of the three perpendicular axes. Thus, in what follows, we will start with the most general states belonging to $\H_j$ and enforce them to yield uniform distributions in each of the three angular momentum eigenbases. To simplify the  notation, we will use $x$, $y$ and $z$ subscripts to denote unit vectors along three Cartesian axes, and we will skip $j$ in $\ket{j,m}$ as it will be fixed in each of the following sections. Also, for $m$, instead of using $\pm 1/2$, $\pm 1$, etc., we will use a tidier and self-explanatory notation employing the following symbols: \mbox{$\{ \Uuparrow,\Uparrow,\uparrow,0,\downarrow,\Downarrow,\Ddownarrow\}$}. Finally, for the convenience of the reader, Appendix~\ref{app:angular} contains expressions for angular momentum eigenstates along $x$ and $y$ axes in terms of standard angular momentum eigenstates along the $z$ axis.


\subsubsection[Spin-1/2]{\texorpdfstring{$j=1/2$}{Spin-1/2}}

\begin{figure}[t]
    \centering
    \includegraphics[width=0.6\columnwidth]{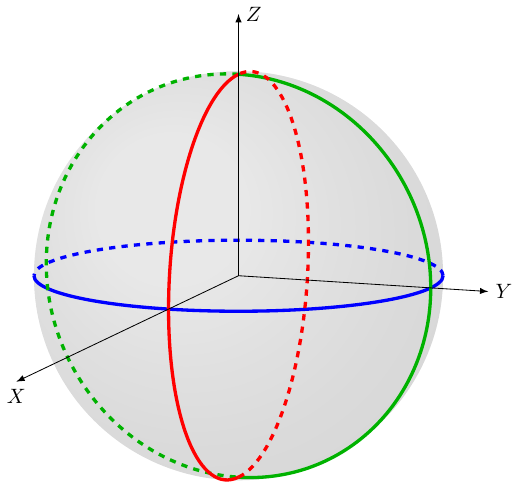}
    \caption{\label{fig:j=12}\textbf{No perfect quantum protractor for $j=1/2$.} States that are optimal quantum protractors with respect to axes $x$, $y$, and $z$ lie on great circles (green, red, and blue, respectively) around these axes on the Bloch sphere. States that lie on the intersection of two such great circles correspond to optimal quantum protractors of rank 2. Since the three great circles do not all intersect in one point, there is no perfect quantum protractor for spin-1/2 systems.}
\end{figure}

A general state $\ket{\psi}\in\H_{1/2}$, ignoring the irrelevant global phase, is given by:
\begin{equation}
\ket{\psi}=\sqrt{p_\uparrow} \ket{\uparrow}_z +\sqrt{p_{\downarrow}} e^{i\phi_{\downarrow}} \ket{\downarrow}_z.
\end{equation}
From Proposition~\ref{prop:optimal1}, we know that 
\begin{equation}
    \label{eq:1/2}
    p_\uparrow=p_{\downarrow}=\frac{1}{2}.  
\end{equation}
Using the above one finds
\begin{equation}
    |\braket{\psi|\!\uparrow}_x\!|^2=\frac{1+\cos\phi_\downarrow}{2},\quad |\braket{\psi|\!\uparrow}_y\!|^2=\frac{1-\sin\phi_\downarrow}{2}.
\end{equation}
Since it is impossible to make both of the above expressions simultaneously equal to $1/2$ (rendering the corresponding distributions uniform), for $j=1/2$ there does not exist a perfect quantum protractor. Note, however, that since each of the above two expressions can be separately equal to $1/2$ for two different $\phi_\downarrow$, there exist two optimal protractors of rank 2. We illustrate this in Fig.~\ref{fig:j=12}, where the condition of being an optimal quantum protractor with respect to a given axis can be understood geometrically. More generally, an optimal quantum protractor with respect to two arbitrary (not necessarily perpendicular) axes $\hat{\v{n}}$ and $\hat{\v{m}}$ is given by the eigenstates of $(\hat{\v{n}}\times \hat{\v{m}})\cdot{\v{J}}$, where $\times$ denotes the cross product.
 

\subsubsection[Spin-1]{\texorpdfstring{$j=1$}{Spin-1}}
\label{sec:spin-1}

Ignoring the irrelevant global phase and employing Proposition~\ref{prop:optimal1} makes a general state $\ket{\psi}\in\H_1$ take the following form:
\begin{equation}
  \label{eq:j=1}
\ket{\psi}=\frac{1}{\sqrt{3}}\left(e^{i\phi_\uparrow} \ket{\uparrow}_z +\ket{0}_z + e^{i\phi_{\downarrow}} \ket{\downarrow}_z\right).
\end{equation}
Using the above, we get
\begin{subequations}
\begin{align}    |\braket{\psi|\!\uparrow}_x\!|^2+|\braket{\psi|\!\downarrow}_x|^2&=\frac{2+\cos(\phi_\uparrow-\phi_\downarrow)}{3},\\ |\braket{\psi|\!\uparrow}_y|^2\!+|\braket{\psi|\!\downarrow}_y|^2&=\frac{2-\cos(\phi_\uparrow-\phi_\downarrow)}{3}.
\end{align}
\end{subequations}
Enforcing uniform distributions clearly means that one of the following two equations holds
\begin{equation}
    \phi_\downarrow=\phi_\uparrow+\pi/2,\qquad \phi_\downarrow=\phi_\uparrow-\pi/2,
\end{equation}
leading to
\begin{equation}                    |\braket{\psi|0}_x|^2=|\braket{\psi|0}_y|^2\!=\frac{1}{3}.    
\end{equation}
Moreover, in the first case, we get
\begin{align}    |\braket{\psi|\!\uparrow}_x\!|^2&=|\braket{\psi|\!\uparrow}_y|^2\!=\frac{2-\sqrt{2}\left(\cos\phi_\uparrow-\sin\phi_\uparrow\right)}{6},
\end{align}
and so uniform distributions along both $x$ and $y$ axes are obtained if and only if $\phi_\uparrow=\pi/4$ or $\phi_\uparrow=-3\pi/4$. In the second case, we get 
\begin{align}    |\braket{\psi|\!\uparrow}_x\!|^2&=|\braket{\psi|\!\downarrow}_y|^2\!=\frac{2-\sqrt{2}\left(\cos\phi_\uparrow+\sin\phi_\uparrow\right)}{6},
\end{align}
and so uniform distributions along both $x$ and $y$ axes are obtained if and only if $\phi_\uparrow=-\pi/4$ or $\phi_\uparrow=3\pi/4$. 

Therefore, for $j=1$ there exist four perfect quantum protractors given by Eq.~\eqref{eq:j=1} with the following phases (see Fig.~\ref{fig:j=1} for the visualisation of these solutions):
\begin{subequations}
\begin{align}
    \label{eq:angles_j1_1}
    \phi_{\uparrow}=\pm\frac{\pi}{4}\quad \mathrm{and} \quad &\phi_{\downarrow}=\pm\frac{3\pi}{4}, \\
    \label{eq:angles_j1_2}
    \phi_{\uparrow}=\pm\frac{3\pi}{4}\quad \mathrm{and} \quad & \phi_{\downarrow}=\pm\frac{\pi}{4}.
\end{align} 
\end{subequations}
Note that, interestingly, based on the above analysis the conditions for an optimal rank 2 protractor coincide with the conditions for a perfect protractor in the case of \mbox{$j=1$}.

\begin{figure}[t]
    \centering
    \includegraphics[width=0.9\columnwidth]{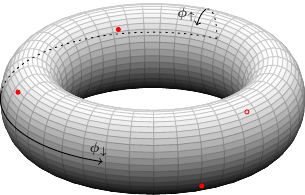}
    \caption{\label{fig:j=1}\textbf{Perfect quantum protractors for $j=1$.} Phases $\phi_\uparrow$ and $\phi_\downarrow$ in Eq.~\eqref{eq:j=1} that yield perfect quantum protractors are indicated by red circles (the empty circle indicates that the point lies on the surface of the torus invisible to the viewer). These states are also the only optimal quantum protractors of rank 2.}
\end{figure}


\subsubsection[Spin-3/2]{\texorpdfstring{$j=3/2$}{Spin-3/2}}
\label{sec:spin-32}

Following the same reasoning as before, the candidate state $\ket{\psi}$ for a perfect quantum protractor in $\H_{3/2}$ has the following form
\begin{align}
    \label{eq:state32}
    \ket{\psi}=&\frac{1}{2}\left(\ket{\Uparrow}_z+e^{i\phi_\uparrow}\ket{\uparrow}_z+e^{i\phi_\downarrow}\ket{\downarrow}_z+e^{i\phi_\Downarrow}\ket{\Downarrow}_z\right).
\end{align}
Using the above, we get
\begin{subequations}
\begin{align}
    \!\!|\!\braket{\psi|\!\Uparrow}_x\!|^2\!+\!|\!\braket{\psi|\!\downarrow}_x\!|^2=&\frac{1}{2}\!-\!\frac{\cos(\phi_\uparrow-\phi_\downarrow)+\cos\phi_\Downarrow}{4},\\ \!\!|\!\braket{\psi|\!\Uparrow}_y\!|^2\!+\!|\!\braket{\psi|\!\downarrow}_y\!|^2=&\frac{1}{2}\!+\!\frac{\sin(\phi_\uparrow-\phi_\downarrow)+\sin\phi_\Downarrow}{4}.
\end{align}
\end{subequations}
Thus, enforcing uniform distributions means that
\begin{equation}
    \label{eq:uparrow_const}
    \phi_\uparrow = \phi_\Downarrow+\phi_\downarrow+\pi.
\end{equation}
Using the above, we further get that $|\braket{\psi|\!\Uparrow}_x\!|=|\!\braket{\psi|\!\Downarrow}_y\!|$ if and only if
\begin{align}
    (\cos\phi_\Downarrow\!-\!\sin \phi_\Downarrow)(1\!+\!\sqrt{3}\sin\phi_\downarrow)=0.
\end{align}
The first parenthesis vanishes when $\phi_\Downarrow=\pi/4$ or $\phi_\Downarrow=-3\pi/4$, while the second one vanishes if \mbox{$\phi_\downarrow=-\arcsin(1/\sqrt{3})$} or $\phi_\downarrow=\arcsin(1/\sqrt{3})   -\pi$. For the first two options, enforcing $|\!\braket{\psi|\!\Uparrow}_x\! |=|\!\braket{\psi|\!\downarrow}_x\!|$ results in \mbox{$\sin\phi_\downarrow=1/\sqrt{3}$}, whereas enforcing the same for the other two options yields
\begin{equation}                
    \cos\phi_\Downarrow+\sin\phi_\Downarrow=0.
\end{equation}

Combing the above all together, we get the following eight candidate solutions:
\begin{equation}
\def\arraystretch{1.5}
\!\!\!\begin{array}{|c|ccc|}
    \hline
    \phi_\Downarrow & &\phi_\downarrow& \\\hline
    ~\pi/4~ & ~~\arcsin \frac{1}{\sqrt{3}} &\lor& \pi-\arcsin \frac{1}{\sqrt{3}}~~ \\ 
    \hline
    ~-3\pi/4~ & ~~\arcsin \frac{1}{\sqrt{3}} &\lor& \pi-\arcsin \frac{1}{\sqrt{3}}~~ \\ 
\hline
    ~ -\pi/4~ & ~~-\arcsin\frac{1}{\sqrt{3}} &\lor& \arcsin \frac{1}{\sqrt{3}}-\pi~~ \\ 
    \hline
    ~ 3\pi/4~ & ~~-\arcsin\frac{1}{\sqrt{3}} &\lor& \arcsin \frac{1}{\sqrt{3}}-\pi~~ \\ \hline
\end{array}~,
\end{equation}
where in each case $\phi_\uparrow$ is given by Eq.~\eqref{eq:uparrow_const}. Straightforward calculation shows that all these solutions actually yield perfect quantum protractors. We visualise these solutions in Fig.~\ref{fig:j=32}, where we also present all solutions for optimal protractors of rank 2.

\begin{figure}[t]
    \centering
    \includegraphics{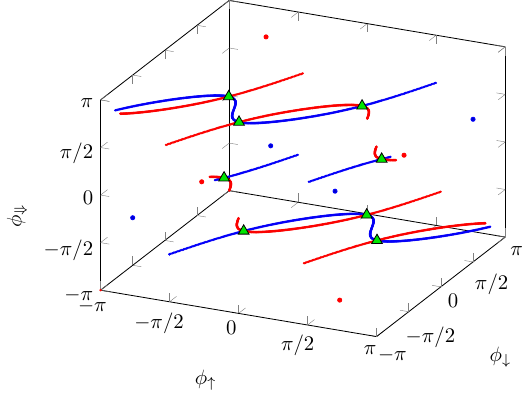}
\caption{\label{fig:j=32}\textbf{Perfect quantum protractors for $j=3/2$.} Phases $\phi_\uparrow$, $\phi_\downarrow$ and $\phi_\Downarrow$ in Eq.~\eqref{eq:state32} that yield optimal quantum protractors of rank 2 with respect to axes $x$ and $z$ ($y$~and~$z$) are indicated by blue (red) lines and points. The eight intersection points, indicated by green triangles, correspond to perfect quantum protractors.}
\end{figure}


\subsubsection[Spin-2]{\texorpdfstring{$j=2$}{Spin-2}}

As before, we start with the following $\ket{\psi}$ being the candidate for a perfect quantum protractor in $\H_2$:

\begin{align}
    \ket{\psi}=&\frac{1}{\sqrt{5}}\left(e^{i\phi_\Uparrow}\ket{\Uparrow}_z+e^{i\phi_\uparrow}\ket{\uparrow}_z+\ket{0}_z\right.\nonumber\\
    &\qquad\qquad\left.+e^{i\phi_\downarrow}\ket{\downarrow}_z+e^{i\phi_\Downarrow}\ket{\Downarrow}_z\right).
\end{align}
Using the above, we get
\begin{subequations}
\begin{align}
|\!\braket{\psi|\!\uparrow}_x\!|^2+|\!\braket{\psi|\!\downarrow}_x\!|^2=&\frac{2-a-b}{5},\\ |\!\braket{\psi|\!\uparrow}_y\!|^2+|\!\braket{\psi|\!\downarrow}_y\!|^2=&\frac{2-a+b}{5},
\end{align}
\end{subequations}
where
\begin{equation}
    a=\cos(\phi_\Uparrow-\phi_\Downarrow),\quad b=\cos(\phi_\uparrow-\phi_\downarrow).
\end{equation}
Since we want to enforce uniform distributions and as the only way for $a+b$ and $a-b$ to vanish simultaneously is $a=b=0$, we get four cases
\begin{equation}        \phi_\downarrow=\phi_\uparrow\pm\pi/2,\quad \phi_\Downarrow=\phi_\Uparrow\pm\pi/2,
\end{equation}
which we will denote by $C_{\pm\pm}$ and analyse one by one. 

For $C_{++}$ we have
\begin{align}
    \!\!|\!\braket{\psi|\!\downarrow}_x\!|^2\!-\!|\!\braket{\psi|\!\uparrow}_y\!|^2=&\frac{2}{5}\cos(\phi_\Uparrow-\phi_\uparrow),
\end{align}
leading to $\phi_\uparrow=\phi_\Uparrow\pm\pi/2$. In either case, additionally enforcing 
\begin{equation}
    \label{eq:enforcing1}
    |\!\braket{\psi|0}_x\! |=|\!\braket{\psi|0}_y\!|\quad\text{and}\quad |\!\braket{\psi|\!\Uparrow}_x\!|=|\!\braket{\psi|\!\Downarrow}_x\! |
\end{equation}
leads to a set of two contradictory equations
\begin{equation}
\label{eq:contradiction}
         \cos \phi_\Uparrow=\sin\phi_\Uparrow\quad\text{and}\quad \cos \phi_\Uparrow=-\sin\phi_\Uparrow. 
\end{equation}
For $C_{+-}$ we have
\begin{align}
    \!\!|\!\braket{\psi|\!\uparrow}_x\!|^2\!-\!|\!\braket{\psi|\!\downarrow}_y\!|^2=&\frac{2}{5}\sin(\phi_\Uparrow-\phi_\uparrow),
\end{align}
leading to $\phi_\uparrow=\phi_\Uparrow$ or $\phi_\uparrow=\phi_\Uparrow+\pi$. In either case, enforcing Eq.~\eqref{eq:enforcing1} leads again to a set of contradictory equations from Eq.~\eqref{eq:contradiction}. For $C_{-+}$ we have
\begin{align}
    \!\!|\!\braket{\psi|0}_x\!|=|\!\braket{\psi|0}_y\!|\quad \Longleftrightarrow \quad\sin\phi_\Uparrow=\cos \phi_\Uparrow,
\end{align}
leading to $\phi_\Uparrow=\pi/4$ or $\phi_\Uparrow=-3\pi/4$. In either case, enforcing 
\begin{equation}
    \label{eq:enforcing2} |\braket{\psi|\!\Uparrow}_x\!|=|\!\braket{\psi|\!\Uparrow}_y\!|\quad\text{and}\quad |\braket{\psi|\!\Uparrow}_x\!|=|\!\braket{\psi|\!\Downarrow}_y\!|  
\end{equation}
leads to a set of contradictory equations from Eq.~\eqref{eq:contradiction} with $\Uparrow$ subscript replaced by $\uparrow$. Finally, for $C_{--}$, we have 
\begin{align}
    \!\!|\!\braket{\psi|0}_x\!|=|\!\braket{\psi|0}_y\!| \quad\Longleftrightarrow\quad \sin\phi_\Uparrow=-\cos \phi_\Uparrow,
\end{align}
leading to $\phi_\Uparrow=-\pi/4$ or $\phi_\Uparrow=3\pi/4$. In either case, enforcing Eq.~\eqref{eq:enforcing2} leads to to a set of contradictory equations from Eq.~\eqref{eq:contradiction} with $\Uparrow$ subscript replaced by $\uparrow$. 

We thus conclude that it is impossible to find a state $\ket{\psi}$ with uniform distributions over the angular momentum eigenstates for three perpendicular axes, and thus for $j=2$ there does not exist a perfect quantum protractor. Of course, due to Corollary~\ref{cor:rank2}, there exist optimal quantum protractors of rank 2, with the following being one example:
\begin{subequations}
\begin{align}   \!\!\ket{\psi}&=\frac{e^{i\theta}\ket{\Uparrow}_{{z}}\!+\!\ket{\uparrow}_{{z}}\!+\!\ket{0}_{{z}}\!-\!\ket{\downarrow}_{{z}}\!+\!e^{i\theta}\ket{\Downarrow}_{{z}}}{\sqrt{5}}\\
    &=\frac{e^{i\phi}\ket{\Uparrow}_{{x}}\!-\!\ket{\uparrow}_{{x}}\!+\!e^{i(\phi+\theta)}\ket{0}_{{x}}\!+\!\ket{\downarrow}_{{x}}\!+\!e^{i\phi}\ket{\Downarrow}_{{x}}}{\sqrt{5}},
\end{align}
\end{subequations}
where
\begin{equation}
    \theta=\arctan\sqrt{\frac{5}{3}},\quad \phi=\arctan\sqrt{\frac{5}{27}}.
\end{equation}
and the solutions ound in Sec.~\ref{sec:spin-32}

\subsubsection[Spin-5/2]{\texorpdfstring{$j=5/2$}{Spin-5/2}}

Again, we start with the following candidate for a perfect quantum protractor in $\H_{5/2}$:
\begin{align}
    \ket{\psi}=&\frac{1}{\sqrt{6}}\left(\ket{\Uuparrow}_z+e^{i\phi_\Uparrow}\ket{\Uparrow}_z+e^{i\phi_\uparrow}\ket{\uparrow}_z\right.\nonumber\\
    &\qquad\left.+e^{i\phi_{\ssDdownarrow}}\ket{\Ddownarrow}_z+e^{i\phi_\Downarrow}\ket{\Downarrow}_z+e^{i\phi_\downarrow}\ket{\downarrow}_z\right).
\end{align}
To prove our point, we will need the following five quantities
\begin{subequations}
    \begin{align}
        A&:= |\!\braket{\psi|\!\Uuparrow}_x\!|^2+|\!\braket{\psi|\!\uparrow}_x\!|^2+|\!\braket{\psi|\!\Downarrow}_x\!|^2-\frac{1}{2},\\
        B&:=|\!\braket{\psi|\!\Uuparrow}_y\!|^2+|\!\braket{\psi|\!\uparrow}_y\!|^2+|\!\braket{\psi|\!\Downarrow}_y\!|^2-\frac{1}{2}, \\
        C&:=|\!\braket{\psi|\!\Uuparrow}_x\!|^2+|\!\braket{\psi|\!\Ddownarrow}_x\!|^2-\frac{1}{3}, \\
        D&:=|\!\braket{\psi|\!\Uparrow}_x\!|^2+|\!\braket{\psi|\!\Downarrow}_x\!|^2-\frac{1}{3}, \\
        E&:=C+|\!\braket{\psi|\!\Uuparrow}_y\!|^2+|\!\braket{\psi|\!\Ddownarrow}_y\!|^2-\frac{1}{3},
    \end{align}
\end{subequations}
which should all be equal to zero if we enforce uniform distributions. 

Direct calculation yields
\begin{align}
    E= \frac{\sqrt{5}}{24}\left(\cos \phi_{\Downarrow}+\cos(\phi_{\Uparrow}-\phi_{\sDdownarrow}) \right),
\end{align}
and so we get two options:
\begin{subequations}
    \begin{align}                       \phi_\Downarrow&=\pi+\phi_{\Uparrow}-\phi_{\sDdownarrow},\\
        \phi_{\Downarrow}&=\pi-\phi_{\Uparrow}+\phi_{\sDdownarrow}.\label{eq:j52_phase}
    \end{align}
\end{subequations}
Assuming the first of the above equations holds, one finds
    \begin{equation}
        A=-\frac{1}{6}\cos(\phi_\uparrow-\phi_\downarrow),\quad
        B=\frac{1}{6}\sin(\phi_\uparrow-\phi_\downarrow),
    \end{equation}
and thus we get a contradiction, since the above two expressions cannot vanish simultaneously. Therefore, we are forced to assume that Eq.~\eqref{eq:j52_phase} holds, which gives us
\begin{equation}
    A+iB = -\frac{1}{6}\left(e^{i(\phi_\downarrow-\phi_\uparrow)}+e^{i(\pi+2\phi_\Uparrow-\phi_{\ssDdownarrow})}+e^{i\phi_{\ssDdownarrow}}\right).
\end{equation}
Now, we observe that the above is a sum of three unit vectors on a complex plane, and such a sum can only vanish if these vectors form an equilateral triangle. As a result, one of the following two cases has to hold:
\begin{subequations}
\begin{align}
    \phi_{\sDdownarrow}+\frac{2\pi}{3}=\phi_\downarrow-\phi_\uparrow, \quad \phi_{\sDdownarrow}+\frac{4\pi}{3}=\pi+2\phi_\Uparrow-\phi_{\sDdownarrow},\\
    \phi_{\sDdownarrow}+\frac{4\pi}{3}=\phi_\downarrow-\phi_\uparrow, \quad \phi_{\sDdownarrow}+\frac{2\pi}{3}=\pi+2\phi_\Uparrow-\phi_{\sDdownarrow},
\end{align}    
\end{subequations}
which, after taking into account Eq.~\eqref{eq:j52_phase}, gives the following two $\mp$ options:
\begin{subequations}
    \begin{align}
        \phi_\downarrow&=\phi_\uparrow+\phi_\Uparrow\mp\frac{\pi}{2},\\
        \phi_\Downarrow&=\mp \frac{5\pi}{6},\\
        \phi_{\sDdownarrow}&=\phi_\Uparrow\mp\frac{11\pi}{6}.
    \end{align}
\end{subequations}

Finally, using these two sets of $\mp$ solutions to simplify expressions for $C$ and $D$, we get
\begin{subequations}
    \begin{align}
        C&=\frac{(\sqrt{5}-5\sqrt{3})\cos\phi_\uparrow\pm(\sqrt{15}+5)\sin\phi_\uparrow}{48\sqrt{2}},\\
        D&=\frac{(\sqrt{5}+3\sqrt{3})\cos\phi_\uparrow\pm(\sqrt{15}-3)\sin\phi_\uparrow}{48\sqrt{2}}.
    \end{align}
\end{subequations}
It is then clear that independently of the chosen $\mp$ solution, it is impossible to simultaneously make $C$ and $D$ vanish. We thus conclude that it is impossible to find a state $\ket{\psi}$ with uniform distributions over the angular momentum eigenstates for three perpendicular axes, and thus for $j=5/2$ there does not exist a perfect quantum protractor. 


\subsubsection[Spin-3 and higher angular momenta]{\texorpdfstring{$j=3$ and higher angular momenta}{Spin-3 and higher angular momenta}}

Resolving the problem of existence of perfect quantum protractors using direct analytical methods becomes less and less feasible with growing angular momentum~$j$, as the number of phase variables increases. One can then employ numerical optimisation methods to attack the problem for higher values of $j$. Namely, one can first use analytical methods to decrease the number of phase variables as much as possible, and then maximize some function $f$ of the remaining variables, with~$f$ chosen such that it attains its maximum for the perfect quantum protractor. One simple choice for that is to start with a state $\ket{\psi}$ from Proposition~\ref{prop:optimal1} with $\hat{\v{n}}=z$, remove the global phase (so that the number of phase $\phi_m$ variables is $2j$), and maximize
\begin{equation}
    f(\v{\phi})=H \left(\v{p}^{x}\right) + H\left(\v{p}^{y}\right) - 2\log(d).
\end{equation}
In the above $H(\cdot)$ denotes the Shannon entropy, whereas $\v{p}^{k}$ is the probability distribution over measurements of angular momentum along axes \mbox{$k\in\{x,y\}$}. Clearly, since the maximum value of Shannon entropy is $\log d$, and for a perfect quantum protractor one gets uniform distributions for measurements along all three perpendicular axes, the function $f$ attains its maximum equal to 0 only for perfect quantum protractors. 

Using this method, we were able to find a perfect quantum protractor for $j=3$, i.e., we recovered the analytical expressions for the phases from the numerical solutions for which $f$ was equal to 0. It can be verified analytically that the following state,
\begin{align}
    \label{eq:state3}
    \ket{\psi}=&\frac{1}{\sqrt{7}}\left(e^{i\phi_{\ssUuparrow}}\ket{\Uuparrow}_z+e^{i\phi_\Uparrow}\ket{\Uparrow}_z+e^{i\phi_\uparrow}\ket{\uparrow}_z+\ket{0}\right.\nonumber\\
    &\quad\qquad\left.+e^{i\phi_{\ssDdownarrow}}\ket{\Ddownarrow}_z+e^{i\phi_\Downarrow}\ket{\Downarrow}_z+e^{i\phi_\downarrow}\ket{\downarrow}_z\right),
\end{align}
with phases
\begin{equation}
\begin{split}
    \label{eq:angles3}
    \phi_{\sUuparrow}=-\theta-\frac{\pi}{4},\quad \phi_\Uparrow=\pi/4,\quad \phi_\uparrow=\theta+\frac{3\pi}{4},\\
    \phi_{\sDdownarrow}=\theta-\frac{3\pi}{4},\quad \phi_\Downarrow=-\pi/4,\quad \phi_\uparrow=-\theta+\frac{\pi}{4},
\end{split}
\end{equation}
where $\theta=\frac{1}{2}\arctan\sqrt{15}$, is indeed a perfect quantum protractor (but most probably not the only one).

Moreover, for $j=7/2$ the numerical optimization gave the value of $f$ to be zero up to numerical precision, which strongly suggests that a perfect quantum protractor exists in $\H_{7/2}$ (see Appendix~\ref{app:numerics} for details). For higher values of $j$, this simple numerical optimisation procedure does not find phases for which $f$ is equal to zero. This, however, may be inconclusive, as the number of variables may be too large to efficiently find the global maximum. 


\section{Properties} 
\label{sec:properties}

In this section we will argue that perfect quantum protractors are pure states that maximise the uncertainty of angular momentum measurements. Thus, they can be considered as the opposite of the much more studied states that minimise angular momentum uncertainty~\cite{dammeier2015uncertainty}, such as spin coherent states~\cite{radcliffe1971some} or spin intelligent states~\cite{aragone1976intelligent}. As we shall see, this property of maximising uncertainty renders angular momentum certainty relations trivial, and makes perfect quantum protractors optimal resources for particular metrological tasks.


\subsection{Entropic uncertainties}

Let us start with information-theoretic (entropic) quantifiers of uncertainty. A very general family of such uncertainty measures, based on a minimal set of axioms, is given by R\'{e}nyi entropies~\cite{renyi1961measures}. These are defined by
\begin{equation}
    H_\alpha(\v{p}):=\frac{1}{1-\alpha}\log\left(\sum_i p_i^\alpha\right),
\end{equation}
where $\alpha\geq 0$\footnote{Importantly, $\lim_{\alpha\to 1} H_\alpha$ is the previously introduced and most widely used entropic quantity -- the Shannon entropy~$H$.} and $\v{p}$ denotes the probability vector of observing different events. Here, these events will correspond to measuring different values of angular momentum along different axes for a state $\ket{\psi}$, and so we introduce probability distributions $\v{p}^x$, $\v{p}^y$ and $\v{p}^z$, whose entries are given by \mbox{$|\!\braket{\psi|j,m}_x\!|^2$}, \mbox{$|\!\braket{\psi|j,m}_y\!|^2$} and \mbox{$|\!\braket{\psi|j,m}_z\!|^2$}, respectively. 

Now, it is easy to see that, independently of~$\alpha$, R\'enyi entropies achieve their maxima, equal to $\log (2j+1)$, only for a uniform distribution over angular momentum eigenstates. Thus, the following very general entropic quantifier of total uncertainty about angular momenta measurements is trivially upper bounded,
\begin{equation}
    H_{\alpha_x}\left(\v{p}^{x}\right)+H_{\alpha_y}\left(\v{p}^{y}\right)+H_{\alpha_z}\left(\v{p}^{z}\right)\leq 3\log (2j+1),
\end{equation}
for all $\alpha_k\geq 0$ and $k\in\{x,y,z\}$. Although physicists are mainly preoccupied with the lower bounds for such uncertainty quantifiers (with the famous Maassen-Uffink entropic uncertainty relation~\cite{maassen1988generalized} being a prime example), the non-trivial upper bounds were also studied under the name of certainty relations~\cite{sanchez1993entropic,sanchez1995improved,puchala2015certainty}. For example, in Ref.~\cite{sanchez1993entropic} it was shown that for a spin $j=1/2$ system one has
\begin{equation}
    \label{eq:certainty_ex}
    H(\v{p}^{x})+H(\v{p}^{y})+H(\v{p}^{z})\leq \frac{3\log 6\!-\!\sqrt{3}\log(2\!+\!\sqrt{3})}{2}.
\end{equation}

However, the existence of a perfect protractor for a spin-$j$ system means that there is no non-trivial upper bound on the angular momentum uncertainty. This is because Proposition~\ref{prop:optimal1} requires perfect quantum protractors to be given by a uniform superposition over all $(2j+1)$ angular momentum eigenstates for all three axes, and so such states attain the trivial upper bound of \mbox{$3 \log (2j+1)$}. Thus, based on the results presented in this work, we know that there are no equivalents of Eq.~\eqref{eq:certainty_ex} for spin-$j$ systems when \mbox{$j\in\{1,3/2,3,7/2\}$}. More generally, we can summarise it as follows.

\begin{corollary}
    If a perfect quantum protractor exists in~$\H_j$, then there is no non-trivial entropic certainty relation for angular momentum operators, since it saturates the trivial bound,
    \begin{equation}
        H(\v{p}^{x})+H(\v{p}^{y})+H(\v{p}^{z}) = 3\log(2j+1).
    \end{equation}    
\end{corollary}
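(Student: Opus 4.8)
The plan is to read off the conclusion almost directly from \textbf{Proposition~\ref{prop:optimal1}} together with the observation already made about R\'enyi entropies. The key logical chain is short: a perfect quantum protractor is, by definition, a state that is an optimal quantum protractor simultaneously with respect to the three perpendicular axes $x$, $y$, $z$. By Proposition~\ref{prop:optimal1}, being an optimal quantum protractor with respect to a single axis $\hat{\v{n}}$ forces the state to be a uniform superposition over the eigenbasis of $\hat{\v{n}}\cdot\v{J}$, i.e.\ $|\braket{\psi|j,m}_{\hat{\v{n}}}|^2 = 1/(2j+1)$ for every $m$. Applying this to each of the three axes simultaneously gives that all three distributions $\v{p}^x$, $\v{p}^y$, $\v{p}^z$ are exactly uniform on $2j+1$ outcomes.

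Next I would invoke the fact, established a few lines above the corollary, that the Shannon entropy $H(\v{p})$ of a probability vector on $2j+1$ outcomes is maximised precisely by the uniform distribution, attaining the value $\log(2j+1)$. Since each of $\v{p}^x,\v{p}^y,\v{p}^z$ is uniform, each entropy term equals $\log(2j+1)$, and summing the three gives
\begin{equation}
    H(\v{p}^{x})+H(\v{p}^{y})+H(\v{p}^{z}) = 3\log(2j+1),
\end{equation}
which is exactly the trivial upper bound. This establishes saturation, and hence the non-existence of any non-trivial entropic certainty relation for that value of $j$, since the bound is achieved with equality.

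There is essentially no obstacle here: the statement is a corollary in the true sense, a packaging of Proposition~\ref{prop:optimal1} (characterising optimal protractors as uniform superpositions) with the elementary extremal property of the Shannon entropy. The only point that deserves a sentence of care is the logical direction, namely emphasising that \emph{existence} of the protractor is what forces saturation: one assumes a perfect quantum protractor exists in $\H_j$, uses the definition to say it is optimal with respect to all three axes, and then Proposition~\ref{prop:optimal1} supplies the uniformity in each basis. I would keep the proof to two or three sentences and explicitly cite Proposition~\ref{prop:optimal1} as the source of uniformity and the maximality of entropy on uniform distributions as the source of equality.
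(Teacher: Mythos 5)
Your proposal is correct and follows exactly the paper's own reasoning: Proposition~\ref{prop:optimal1} forces uniformity of $\v{p}^x$, $\v{p}^y$, $\v{p}^z$, and maximality of the Shannon entropy on uniform distributions yields saturation of the trivial bound $3\log(2j+1)$. Nothing is missing, and the approach coincides with the paper's.
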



\subsection{Variance-based uncertainties}
\label{sec:variances}

Let us now proceed to more traditional quantifiers of uncertainty based on the variance,
\begin{equation}
    \label{eq:variance}
    \sigma^2_k := \langle J_k^2\rangle -\langle J_k\rangle^2,
\end{equation}
where we omit the dependence of $\sigma^2_k$ on $\ket{\psi}$ and use a shorthand notation with $\langle A\rangle$ denoting the expectation value of $A$ in a state $\ket{\psi}$. We can then consider three Pythagorean means (arithmetic, geometric and harmonic) of angular momentum variances along three perpendicular axes:
\begin{subequations}
\begin{align}
    \bar{\sigma}^2_A&:=\frac{\sigma^2_x+\sigma^2_y+\sigma^2_z}{3},\\
    \bar{\sigma}^2_G&:=\sqrt[3]{\sigma^2_x\sigma^2_y\sigma^2_z},\\
    \bar{\sigma}^2_H&:=\left(\frac{(\sigma^2_x)^{-1}+(\sigma^2_y)^{-1}+(\sigma^2_z)^{-1}}{3}\right)^{-1}.
\end{align}
\end{subequations}
Crucially, for a perfect quantum protractor, variances for all three axes are the same,
\begin{align}
    \!\!\sigma^2_k&=\frac{1}{2j\!+\!1}\left(\sum_{m=-j}^j \!\! m^2 - \left(\sum_{m=-j}^j \!\! m\right)^{2}\right)=\frac{j(j\!+\!1)}{3},
\end{align}
and so all Pythagorean means are equal to \mbox{$j(j+1)/3$}.

For the arithmetic mean, let us note that
\begin{equation}
    \label{eq:var_sum}
    \bar{\sigma}^2_A \leq \frac{\braket{J_x^2}+\braket{J_y^2}+\braket{J_z^2}}{3}=\frac{\braket{J^2}}{3}=\frac{j(j+1)}{3},
\end{equation}
and this upper bound is attained by states satisfying
\begin{equation}
    \braket{J_x}=\braket{J_y}=\braket{J_z}=0.
\end{equation}
Such states are known in the literature as spin 1-anticoherent states~\cite{zimba2006anticoherent,goldberg2018quantum}, and perfect quantum protractors form particular examples of such states. Note also that perfect quantum protractors have equal angular momentum variances along three perpendicular axes, $x$, $y$, and $z$, but not necessarily along any axis as spin 2-anticoherent states have (see Sec.~\ref{sec:conclusions} for more discussion on that subject).

For the geometric mean, we have
\begin{align}
    \left(\bar{\sigma}^2_G\right)^3 \leq& \braket{J_x^2}\braket{J_y^2}\braket{J_z^2}\nonumber\\
    =&\braket{J_x^2}\braket{J_y^2}\left(j(j+1)-\braket{J_x^2}-\braket{J_y^2}\right).
\end{align}
Thus, to upper bound it, we can consider the following function of two variables,
\begin{equation}
    g(x,y):= x y (j(j+1)-x-y),
\end{equation}
and find its maximum within the region specified by the following constraints:
\begin{subequations}
\begin{align}
    \label{eq:constr1}
    0\leq x,y &\leq j^2,\\
    x+y&\leq j(j+1).\label{eq:constr2}
\end{align}
\end{subequations}
Direct calculation of partial derivatives yields the following maximum
\begin{equation}
\label{eq:max_geo}
    g\left(x=\frac{j(j+1)}{3},y=\frac{j(j+1)}{3}\right)=\left(\frac{j(j+1)}{3}\right)^3.
\end{equation}
To be sure that this is indeed an upper bound, we also need to verify the value of $g$ at the boundaries specified by Eqs.~\eqref{eq:constr1}-\eqref{eq:constr2}. Clearly, for $x=0$, $y=0$ and \mbox{$x+y=j(j+1)$}, we get $g=0$, and so the only non-trivial boundary one has to consider is given by $x=j^2$ and $y\in[0,j]$ (the case with inverted $x$ and $y$ is analogous). This yields
\begin{equation}
    g(x=j^2,y)=j^2 y (j-y),
\end{equation}
which takes its maximum at $y=j/2$, resulting in 
\begin{equation}
    g\left(x=j^2,y=\frac{j}{2}\right)=\frac{j^4}{4}.
\end{equation}
Since the above value is smaller than the one from Eq.~\eqref{eq:max_geo}, we conclude that
\begin{equation}
    \bar{\sigma}^2_G \leq \frac{j(j+1)}{3},
\end{equation}
and that the perfect quantum protractor attains the above upper bound.

Finally, for the harmonic mean, we have
\begin{align}
    3\left(\bar{\sigma}^2_H\right)^{-1}&=(\sigma^2_x)^{-1}+(\sigma^2_y)^{-1}+(\sigma^2_z)^{-1}\nonumber\\
    &=\frac{1}{{\sigma}^2_x}+\frac{1}{{\sigma}^2_y}+\frac{1}{j(j+1)-{\sigma}^2_x-{\sigma}^2_y}.
\end{align}
To upper bound it, we can then consider the following function,
\begin{equation}
    h(x,y)=\frac{1}{x}+\frac{1}{y}+\frac{1}{j(j+1)-x-y},
\end{equation}
and, as before, find its maximum value subject to constraints from Eqs.~\eqref{eq:constr1}-\eqref{eq:constr2}. Analogous and straightforward calculations lead to the same result, i.e., that the maximum is achieved at \mbox{$x=y=j(j+1)/3$}. Therefore, the harmonic mean is upper bounded by
\begin{equation}
    \label{eq:harmonic_ineq}
    \bar{\sigma}^2_H \leq \frac{j(j+1)}{3},
\end{equation}
and perfect quantum protractors saturate the above bound. 

We thus conclude that perfect quantum protractors maximise all Pythagorean means of variance-based uncertainties about measurements of angular momenta along three perpendicular axes. Thus, these are the states of complete knowledge (since they are pure states) that nevertheless yield maximal uncertainty about angular momenta. As we shall see in the next section, this makes them a valuable resource for quantum metrology.


\subsection{Entanglement}
\label{sec:entanglement}

In Sec.~\ref{sec:existence}, we have proved that a perfect quantum protractor for a $d$-dimensional system can only exist for systems carrying an irrep $j$. This means that it is impossible to obtain a perfect quantum protractor, while combining two (or more) systems carrying irreps $j_1>0$ and $j_2>0$ (with dimensions $d_k=2j_k+1$). The reason is that a perfect quantum protractor, by definition, would need to distinguish \mbox{$d_1d_2=(2j_1+1)(2j_2+1)$} angles; but we know that the highest irrep appearing in the decomposition of $\H_{j_1}\otimes\H_{j_2}$ is $j_{\max}=j_1+j_2$, and so the best we can do is to distinguish $2j_{\max}+1<d_1d_2$ angles. 

Nevertheless, one may focus on perfect quantum protractors living in the highest irrep subspace $\H_{j_{\max}}$ of the joint Hilbert space of a few subsystems, and investigate the resulting entanglement between the subsystems. To clarify this idea, let us start with the simplest example of two spin-1/2 systems, whose total Hilbert space $\H=\H_{1/2}\otimes\H_{1/2}$ splits into a direct sum of $\H_0$ and $\H_1$. Now, we know that in order to get the best quantum protractor $\ket{\psi}\in\H$, it should live in the subspace corresponding to the highest irrep, so that $\ket{\psi}\in \H_1$. We can then take solutions for perfect quantum protractors of spin-1 system from Sec.~\ref{sec:spin-1} and write them explicitly in the basis of two spin-1/2 systems:
\begin{equation}
    \!\!\!\!\ket{\psi}=\frac{1}{\sqrt{3}} \left(\!e^{i\phi_\uparrow}\ket{\uparrow\uparrow}+\frac{1}{\sqrt{2}}(\ket{\uparrow\downarrow}+\ket{\downarrow\uparrow})+e^{i\phi_\downarrow}\ket{\downarrow\downarrow}\!\right),\!
\end{equation}
with the angles $\phi_\uparrow$ and $\phi_\downarrow$ given by Eqs.~\eqref{eq:angles_j1_1}-\eqref{eq:angles_j1_2}. It is a straightforward calculation to show that
\begin{equation}
    \ket{\psi}=\frac{1}{\sqrt{2}} \left[\ket{\phi_\uparrow}\otimes\ket{\uparrow}+\ket{\phi_\downarrow}\otimes\ket{\downarrow} \right],
\end{equation}
where
\begin{subequations}
    \begin{align}
        \ket{\phi_\uparrow}&:=\sqrt{\frac{2}{3}}e^{i\phi_\uparrow}\ket{\uparrow}+\sqrt{\frac{1}{3}}\ket{\downarrow},\\
        \ket{\phi_\downarrow}&:=\sqrt{\frac{1}{3}}\ket{\uparrow}+\sqrt{\frac{2}{3}}e^{i\phi_\downarrow}\ket{\downarrow},
    \end{align}
\end{subequations}
are orthogonal states. Thus, a $j=1$ perfect quantum protractor $\ket{\psi}$ constructed from two spin-1/2 subsystems is a maximally entangled state of these two subsystems. At the same time, it is an easy exercise to verify that for a product state of two spin-1/2 systems, it is impossible to obtain a perfect quantum protractor (the best one can do is the same as with a single spin-1/2 system).

The next perfect quantum protractor $\ket{\psi}$ that we found in this paper lives in $\H_{3/2}$, which is the highest irrep in the decomposition of the total Hilbert space of three spin-1/2 subsystems, or spin-1 and spin-1/2 subsystems. One can then again investigate the entanglement properties of these three or two subsystems in a state $\ket{\psi}$. Here, we will analyse in detail the first (harder) case, and just state the result for the second case, which can be obtained in a very similar manner. Using Clebsch-Gordan coefficients, we have
\begin{subequations}
\begin{align}   
    \ket{\Uparrow}=\ket{\uparrow\uparrow\uparrow},\qquad \ket{\Downarrow}=\ket{\downarrow\downarrow\downarrow},\\
    \ket{\uparrow}=\frac{1}{\sqrt{3}} (\ket{\uparrow\uparrow\downarrow}+\ket{\uparrow\downarrow\uparrow}+\ket{\downarrow\uparrow\uparrow}),\\
    \ket{\downarrow}=\frac{1}{\sqrt{3}} (\ket{\downarrow\downarrow\uparrow}+\ket{\downarrow\uparrow\downarrow}+\ket{\uparrow\downarrow\downarrow}),
\end{align}
\end{subequations}
where the states on the left hand sides of the above equations describe the total spin-3/2 system, and the states on the right hand sides of these equations describe three spin-1/2 subsystems. We can then combine the above with a general formula for spin-3/2 perfect quantum protractor, Eq.~\eqref{eq:state32}, to arrive at
\begin{align}
    \ket{\psi} =& \frac{1}{2\sqrt{3}}\left(\sqrt{3}\ket{\uparrow\uparrow\uparrow} +e^{i\phi_\uparrow}(\ket{\uparrow\uparrow\downarrow}+\ket{\uparrow\downarrow\uparrow}+\ket{\downarrow\uparrow\uparrow} )\right.\nonumber\\
    &\left. e^{i\phi_\downarrow}(\ket{\downarrow\downarrow\uparrow}+\ket{\downarrow\uparrow\downarrow}+\ket{\uparrow\downarrow\downarrow})+e^{i\phi_\Downarrow}\sqrt{3}\ket{\downarrow\downarrow\downarrow}\right),
\end{align}
Using eight sets of angles found in Sec.~\ref{sec:spin-32}, we get eight spin-3/2 perfect quantum protractors expressed in the basis of three spin-1/2 subsystems. The simplest way to analyse their entanglement properties is to notice that we deal with pure states for which partial traces over any two spin-1/2 subsystems leave us in a maximally mixed state of a single spin-1/2 subsystem. It is then known~\cite{scott2004multipartite}, that such a state is a GHZ state, i.e., up to local unitaries it is given by $(\ket{\uparrow\uparrow\uparrow}+\ket{\downarrow\downarrow\downarrow})/\sqrt{2}$ which, according to many multipartite entanglement measures, is regarded as a maximally entangled state of three subsystems. And for the case of combining spin-1 and spin-1/2 subsystems, one can show that a perfect quantum protractor is given by a maximally entangled bipartite state.

Finally, the last perfect quantum protractor for which we have analytic expressions lives in $\H_3$. There are many combinations of subsystems for which this subspace appears as the highest irrep, e.g., two spin-3/2 subsystems, three spin-1 subsystems, six spin-1/2 subsystems, one spin-1 and one spin-2 subsystems, etc. Here, we will limit our considerations only to the first case of $\H_{3/2}\otimes\H_{3/2}$, as our aim is not to provide a full analysis, but rather to show that, despite what we have seen so far, perfect quantum protractors not always need to correspond to maximally entangled states. As before, using Clebsch-Gordan coefficients, we can rewrite the spin-3 perfect quantum protractor from Eq.~\eqref{eq:state3} (with angles given by Eq.~\eqref{eq:angles3}) in the basis of two spin-3/2 subsystems, and then trace out one of the subsystems. The resulting mixed state is given by
\begin{equation}
    \rho=\begin{pmatrix}
        \frac{1}{4}& a e^{-3i\pi/4} & \frac{ai}{\sqrt{2}} & 0 \\
        a e^{3i\pi/4} &\frac{1}{4} &0 & \frac{ai}{\sqrt{2}}\\
        -\frac{ai}{\sqrt{2}} & 0 & \frac{1}{4} & a e^{i\pi/4}\\
        0 & -\frac{ai}{\sqrt{2}} & a e^{-i\pi/4} & \frac{1}{4}
    \end{pmatrix},
\end{equation}
where
\begin{equation}
    a=\frac{3\sqrt{6}+2\sqrt{10}}{70}.
\end{equation}
The above is clearly not a maximally mixed state, and so, while the two spin-3/2 subsystems are entangled, they are not maximally entangled. 


\section{Metrological applications}
\label{sec:metrology}

A paradigmatic example of a metrological task in the quantum realm is given by a phase estimation problem~\cite{giovannetti2011advances}. In this setting, a system initially prepared in a generally mixed state $\rho$ undergoes a unitary evolution that transforms it as
\begin{equation}
    \label{eq:phase_estim}
    \rho \rightarrow \rho_{\theta} := e^{-iH \theta} \rho e^{iH \theta},
\end{equation}
where $H$ is a known self-adjoint operator and $\theta$ is a real number corresponding to an unknown phase shift. The goal is to find the estimate $\hat{\theta}$ of $\theta$ by preparing the system in the optimal probe state $\rho$, allowing it to undergo the unknown phase shift $\theta$ and performing a generalised quantum measurement specified by POVM elements $\{\Pi_{\hat{\theta}}\}$. Finally, upon observing outcome $\hat{\theta}$, which happens with probability
\begin{equation}
    p(\hat{\theta}|\theta) = \tr{\rho_{\theta}\Pi_{\hat{\theta}}},
\end{equation}
one guesses that $\theta=\hat{\theta}$. Of course, the choice of the optimal strategy (i.e., $\rho$ and $\{\Pi_{\hat{\theta}}\}$) depends on the measure of success one chooses, as well as on the prior knowledge about the distribution of $\theta$~\cite{demkowicz2011optimal,hall2012heisenberg}.

Here, we will consider the case when $H$ in Eq.~\eqref{eq:phase_estim} is given by an angular momentum operator $J_k$. Thus, the transformation corresponds either to a physical rotation of the system around the axis $k$ by an angle $\theta$, or to a time evolution generated by a Hamiltonian given by the angular momentum along the axis $k$ (e.g., precession in a magnetic field). Estimating the phase can then provide information about the angle of rotation or the strength of the magnetic field, so the probe state $\rho$ can be seen as a protractor or a magnetometer.

If we only cared about estimating $\theta$ for a single $J_k$, we could simply use the known results on quantum phase estimation. Instead here, we will focus on a very particular case of a multiparameter estimation problem, involving three different generators: $J_x$, $J_y$ and $J_z$ (for a related problem of finding optimal states for aligning Cartesian reference frames see Ref.~\cite{kolenderski2008optimal}). Namely, imagine that one knows that an unknown phase shift of $\theta_x$, $\theta_y$ or $\theta_z$ will be generated by either $J_x$, $J_y$ or $J_z$. However, one does not \emph{a priori} know which generator will be used (a situation referred to as a random sensing scenario in Ref.~\cite{albarelli2022probe}), so one needs to prepare a probe state $\rho$ that will be able to detect all these different phase shifts. Then, before the measurement is performed, one obtains the information about the generator (i.e., the axis of rotation), and can choose an appropriate measurement. As we will now argue, the optimal pure probe state in such a scenario is given by a perfect quantum protractor.


\subsection{Standard setup}

Given $n$ copies of a pure state $\ket{\psi}$, the quantum Cram{\'e}r-Rao bound sets the following (attainable) lower bound on the inaccuracy of estimating the phase~$\theta_k$ generated by $J_k$~\cite{helstrom1969quantum}:
\begin{equation}
    \Delta\theta_k^2\geq \frac{1}{4n\sigma^2_k},
\end{equation}
where $\Delta\theta_k^2$ is the mean square error of estimating $\theta_k$ and $\sigma_k^2$ is the variance of $J_k$ in a state $\ket{\psi}$ as defined in Eq.~\eqref{eq:variance}. Now, if states $\ket{\psi}$ are used to estimate the phases $\theta_x$, $\theta_y$ and $\theta_z$ generated by $J_x$, $J_y$ and $J_z$ with equal probability 1/3, then the expected inaccuracy is lower bounded by
\begin{equation}
    \frac{\Delta\theta_x^2+\Delta\theta_y^2+\Delta\theta_z^2}{3}\geq \frac{1}{4n\bar{\sigma}^2_H}\geq \frac{3}{4n j (j+1)},
\end{equation}
where the second inequality comes from Eq.~\eqref{eq:harmonic_ineq}. As this inequality is saturated by $\ket{\psi}$ being a perfect quantum protractor, we conclude that such states minimise the inaccuracy of estimation and are thus optimal for the considered metrological task.


\subsection{Discrete setup}

The existence of perfect quantum protractors also affects a discrete version of the estimation problem studied recently in Ref.~\cite{rzkadkowski2017discrete}. Discrete here means that the phase is not arbitrary, but takes one of the $n$ values $\theta_l=2\pi l/n$ for $l\in\{0,1,\dots,n-1\}$ with equal probability $1/n$. Fixing for a while a single generator given by $J_k$, the problem of phase estimation can then actually be seen as a problem of state discrimination between $n$ states,
\begin{equation}
    \rho_{l} = e^{-iJ_k\theta_l} \rho e^{iJ_k\theta_l}, 
\end{equation}
where $\rho$ is the initial probe state of size $d$. The probability of success is then given by
\begin{equation}
  \!\!\!  p_{\mathrm{succ}}\left(\rho,\{\Pi_l\}\right) = \frac{1}{n}\sum_{l=1}^n  \tr{\rho_{l} \Pi_{l}},
\end{equation}
where $\{\Pi_l\}$ denote POVM elements. Following Ref.~\cite{rzkadkowski2017discrete}, one can show that optimising over all states and over all POVMs, this probability is given by
\begin{equation}
    \! p_{\mathrm{succ}}^{\mathrm{opt}}:=\!\max_{\rho,\{\Pi_l\}}p_{\mathrm{succ}}\left(\rho,\{\Pi_l\}\right)=\!\left\{ \begin{array}{cc}
         d/n& \mathrm{for~} n \geq d,\\
         1&  \mathrm{for~}n<d.
    \end{array}\right.
\end{equation}

Crucially, when $n\geq d$, this optimal value can be achieved by the optimal quantum protractor with respect to the axis $k$. One simply prepares a pure probe state $\rho=\ketbra{\psi}{\psi}$ with $\ket{\psi}$ given by Eq.~\eqref{eq:optimal1} (with the axis $\hat{\v{n}}$ specified by $k$) and POVM elements given by
\begin{equation}
    \Pi_{l} = 
         \frac{d}{n}e^{-i J_k \theta_l} \ketbra{\psi}{\psi} e^{i J_k \theta_l}.
\end{equation}
The above operators are obviously positive and it is a straightforward calculation to show that
\begin{equation}
    \sum_l \Pi_{l}=\iden.    
\end{equation}
We then get
\begin{equation}
    p_{\mathrm{succ}} = \frac{1}{n}\sum_{l=1}^n  \tr{\rho_{l} \Pi_{l}} = \frac{d}{n},
\end{equation}
where each term in the sum is equal to $d/n$, so the success probability is independent of the actual phase. Since perfect quantum protractors have the properties used above simultaneously for three different generators given by $J_x$, $J_y$ and $J_z$, we can thus conclude that such states can be employed as optimal probe states simultaneously maximising the probability of success for three estimation problems of discrete phases generated by $J_x$, $J_y$ and~$J_z$.


\section{Experiment}
\label{sec:experiment}

\subsection{Experimental setup and sequence}

To experimentally implement perfect quantum protractors, an atomic vapour of rubidium-87 is used. The lowest energy states of the atoms are two hyperfine levels with the total angular momentum $j=1$ and $j=2$.\footnote{Conventionally, these states are indicated by $F$. However, to keep consistency with the previous discussion, here they are denoted by $j$.} At 50$^\circ$C, the vapour concentration is 10$^{11}$~atoms/cm$^3$, which indicates that individual interactions between atoms, being their point-like collisions, are rare (with characteristic times on the order of hundreds of milliseconds). In turn, the quantum state of the system can be described using a single-atom/collective density matrix~\cite{Auzinsh2010OpticallyInteractions}. 

The rubidium atoms are contained in a 3-cm in diameter paraffin-coated vapour cell (the total number of atoms in the cell is $1.4\times 10^{12}$~atoms), which is placed inside a magnetic field consisting of three layers of mu-metal and a single, innermost ferrite layer. The shield of a shielding factor 10$^{6}$ provides stable magnetic conditions, reducing external, uncontrollable magnetic fields. Apart from the cell, the shield additionally houses a set of magnetic-field coils, which enables compensation of residual magnetic fields, as well as generation of homogeneous magnetic-field pulses in the ${x}$, ${y}$, and ${z}$ directions. The light used to illuminate rubidium atoms is provided by three diode lasers. The pump laser is used to create the initial state in the atoms and the probe laser is used to measure/reconstruct the state. Additionally, the repump laser enables depletion of the $j=2$ level so that the whole system effectively reduces to a spin-1 system. In general, the experimental sequence is divided into three stages: (1) generation of a desired quantum state, which involves optical pumping with the pump tuned to the $j=1$ state along with the depletion of the $j=2$ state with the repump, (2) manipulation of the state with magnetic fields, and (3) measurements of the state. The simplified scheme of the experimental setup is shown in Fig.~\ref{fig:setup} and described in more detail in Ref.~\cite{Kopciuch2024optimized}.

\begin{figure}[t]
    \centering
    \includegraphics[width=0.9\columnwidth]{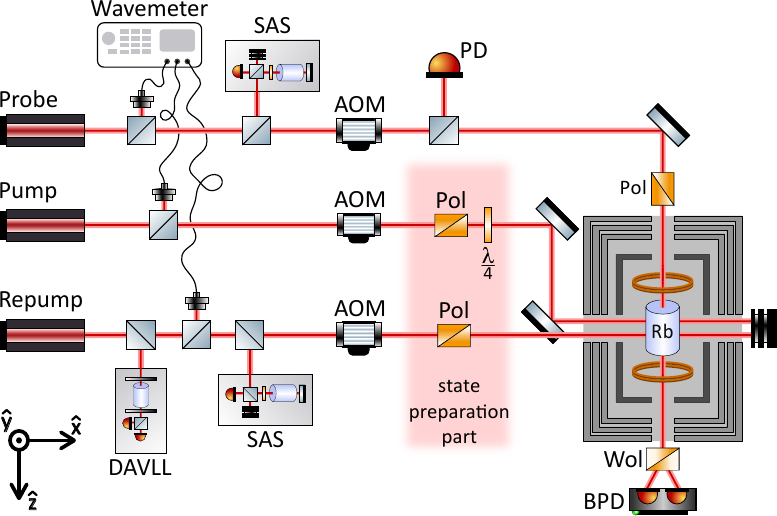}
    \caption{\textbf{Experimental setup.} SAS -- saturation absorption spectroscopy, DAVLL -- dichroic atomic vapour laser lock, AOM -- acousto-optic modulator, PD -- photodiode, Pol -- Glan-Thompson polariser, $\lambda/4$ -- quarter-wave plate, $^{87}$Rb -- paraffin-coated vapour cell filled with $^{87}$Rb, Wol -- Wollaston prism, BPD - balanced photodetector.}
    \label{fig:setup}
\end{figure}

The experimental goal of this work is to generate the ideal quantum protractor state $\ket{\psi}$ and demonstrate that it is the optimal state for measurements of rotations around the $x$, $y$, and $z$ axes, as well as for the determination of magnetic-field strength along these axes. Without loss of generality, we use for that the state derived in Sec.~\ref{sec:spin-1},
\begin{equation}
    \label{eq:exp_protractor}
    \ket{\psi}=\frac{1}{\sqrt{3}}\left(e^{i\frac{3\pi}{4}}\ket{\uparrow}_z+\ket{0}_z+e^{i\frac{\pi}{4}}\ket{\downarrow}_z\right).
\end{equation}
Such a state can be generated from the $\ket{0}_z$ state through the application of a sequence of geometric rotations:
\begin{equation}
    \label{eq:generate_psi}
    \left\vert \psi \right\rangle = R_z\left(-\dfrac{\pi}{4}\right)R_x\left( -\cos^{-1}\left( \dfrac{1}{\sqrt{3}} \right) \right) \left\vert 0 \right\rangle_z.
\end{equation}
Experimentally, the state $\left\vert \psi \right\rangle$ can be generated with $z$-polarized light propagating in the $x$ direction that is tuned to the $j=1\rightarrow j'=1$ transition, which generates the $|0\rangle_z$ state, and non-oscillating magnetic-field pulses applied along the $x$ and $z$-axis. 

As shown in Refs.~\cite{Kopciuch2022, Kopciuch2024optimized}, the time-dependent
polarization rotation of the probe light is given by\footnote{Here, we replaced originally used observables $\langle\alpha_R\rangle$, $\langle\alpha_I\rangle$, and $\langle\beta\rangle$ with the ones used in this work: $\langle M_1^z\rangle$, $\langle M_2^z\rangle$, and $\langle M_3^z\rangle$.}
\begin{align}
     \label{eq:signal}
     \delta\alpha(t,\Delta)=&~\eta(\Delta)e^{-\gamma_1t}\left[
\expval{M^z_1} \sin(2\Omega_Lt)\!+\!\expval{M^z_2}
\cos(2\Omega_Lt)\right]\nonumber\\&~-\zeta(\Delta)e^{-\gamma_2t}\expval{M^z_3},
\end{align}
where $\eta$ and $\zeta$ are scaling constants that depend on the
probe light detuning $\Delta$, $\Omega_L$ is the Larmor-precession
frequency of the atoms, $\gamma_1$ and $\gamma_2$ are the relaxation
rates of coherences and populations, respectively (for more details see
Ref.~\cite{Kopciuch2024optimized}). For the spin-1 system probed with the light tuned to the $j=1\rightarrow j'=1$ transition, the observables governing the signal amplitude are given by:
\begin{subequations}
     \begin{align}
         M^z_1 =&  J_x^2-J_y^2,\label{eq:obs1}\\
         M^z_2 =&  J_xJ_y+J_yJ_x,\label{eq:obs2}\\
         M^z_3 =& J_z.\label{eq:obs3}
     \end{align}
\end{subequations}
Hence, by measuring and analysing the time-dependent rotation $\delta\alpha$,
we can extract the expectation values $\langle M^z_i\rangle$. As explained in the following section, these values contain information about the rotation $\theta_z$ applied around the $z$ axis. To measure rotations around the other two axes by the angles $\theta_x$ or
$\theta_y$, we utilize magnetic-field pulses that effectively transform
the $z$ axis into the $x$ or $y$ axis before the actual
polarization-rotation measurements~\cite{Kopciuch2024optimized}.

\subsection{Optimality}

Using the properties of the angular momentum operators, one can show that
\begin{subequations}
\begin{align}
     e^{iJ_z\theta_z}J_xe^{-iJ_z \theta_z} &= \cos\theta_z J_x -
\sin\theta_z J_y,\\
     e^{iJ_z\theta_z}J_ye^{-iJ_z \theta_z} &= \sin\theta_z J_x +
\cos\theta_z J_y,
\end{align}
\end{subequations}
When one combines the equations with Eqs.~\eqref{eq:obs1}-\eqref{eq:obs3}, one obtains the following relationship that shows how \mbox{$M_i^z(\ket{\psi}):=\expval{M^z_i}$} depend on the rotation angle around the $z$-axis:
\begin{subequations}
\begin{align}
     \label{eq:Mz1}
     M_1^z(e^{-i J_z\theta_z}\ket{\psi})&=r_z \cos (2\theta_z+\alpha_z),\\
     \label{eq:Mz2}
     M_2^z(e^{-i J_z\theta_z}\ket{\psi})&=r_z \sin (2\theta_z+\alpha_z),\\
     M_3^z(e^{-i J_z\theta_z}\ket{\psi})&=M_3^z(\ket{\psi}),
\end{align}
\end{subequations}
with
\begin{align}
     \label{eq:r_z}
     r_z&=\sqrt{(M^z_1(\ket{\psi}))^2+(M^z_2(\ket{\psi}))^2}\nonumber\\
     &=|\langle (J_x+i J_y)^2
\rangle |=|\langle (J^z_+)^2\rangle|
\end{align}
and $\alpha_z=\arccos(M_1^z/r_z)$. For rotations around the two other
axes, all the results are analogous after a cyclic exchange of labels
$x\to y\to z$.

We see that with varying the rotation angle $\theta_k$ around the $k$ axis, the first two components of $\v{M}^k(\ket{\psi})$ form a circle of a radius~$r_k$, while the third component is invariant. It is then clear that the larger the~$r_k$ is, the higher the resolution of the measurement of the rotation angle. Writing the general pure spin-1 state as
\begin{equation}
    \!\!\!\ket{\xi}\!=\! \sqrt{p^k_\uparrow} e^{i\phi^k_\uparrow} \ket{\uparrow}_k+\sqrt{1\!-\!p^k_\uparrow\!-\!p^k_\downarrow} \ket{0}_k+\sqrt{p^k_\downarrow} e^{i\phi^k_\downarrow} \ket{\downarrow}_k,\!
\end{equation}
from Eq.~\eqref{eq:r_z} we get 
\begin{equation}
    r_k=2\sqrt{p_\downarrow^k p_\uparrow^k} \leq p_\downarrow^k +p_\uparrow^k \leq \langle J_k^2 \rangle, 
\end{equation}
where we also used the inequality of arithmetic and geometric means, and the fact that we are dealing with a spin-1 system. Thus, all Pythagorean means of $r_x$, $r_y$ and $r_z$ are upper bounded by the corresponding Pythagorean means of $\braket{J_x^2}$, $\braket{J_y^2}$ and $\braket{J_z^2}$. For the latter, however, we found upper bounds of $2/3$ in Sec.~\ref{sec:variances}. Since a direct calculation shows that for a perfect quantum protractor one has $r_k=2/3$ for all $k\in\{x,y,z\}$, we conclude that these states maximise all Pythagorean means of $r_k$ and, therefore, are optimal in our experiment for distinguishing between different angle rotations along all three perpendicular axes.

\subsection{Results}

As the first step of the experimental investigations, we verified our ability to generate the perfect quantum protractor state [Eq.~\eqref{eq:exp_protractor}]. This is done by performing the full-state tomography~\cite{Kopciuch2022, Kopciuch2024optimized}, the results of which are shown in Fig.~\ref{fig:tomography_protractor}. The obtained results confirm that, using the sequence described by Eq.~\eqref{eq:generate_psi}, we generated the perfect protractor state with a fidelity of 0.995.

\begin{figure}[t]
    \centering
    \includegraphics[width = 0.9\columnwidth]{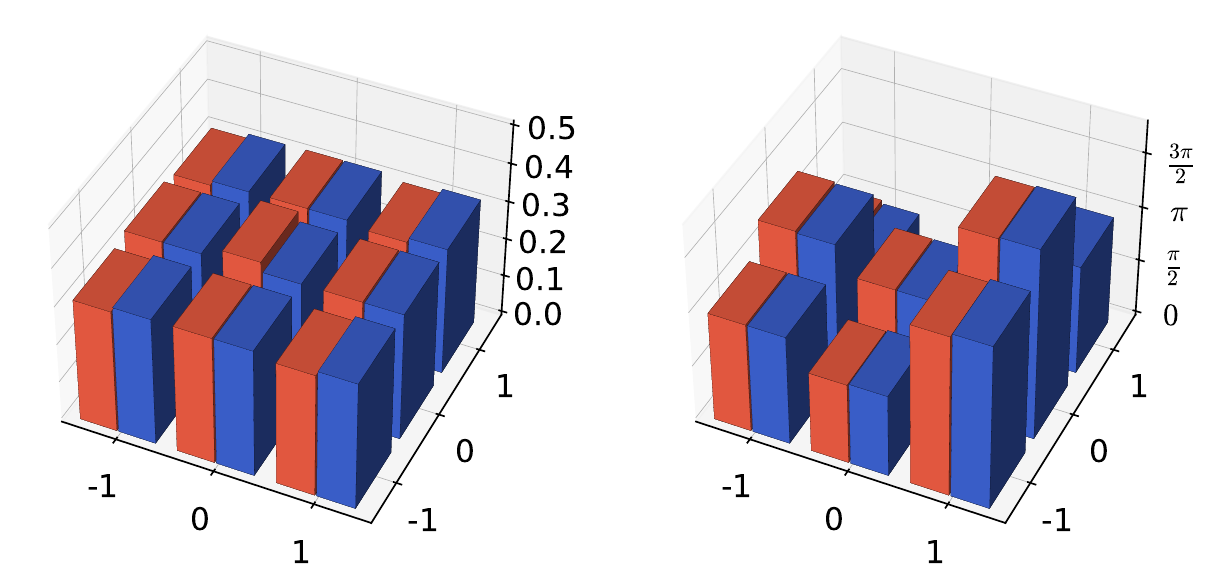}
    \caption{\label{fig:tomography}\textbf{Quantum tomography of a perfect quantum protractor.} Comparison between the reconstructed density matrix of our system (red) and the perfect protractor state (blue). The left graph shows the amplitudes of the respective matrix elements, while the right presents phases. The reconstructed data shows the fidelity of 0.995.}
\label{fig:tomography_protractor}
\end{figure}

Our main objective is to demonstrate that perfect quantum protractors form the optimal metrological resources for the considered experimental task of estimating rotation angles and to show that it saturate the upper bound for $r_k$. To achieve this, we conducted a series of measurements, in which we first applied magnetic pulses to rotate the system by angles from 0 to $\pi$ around the $z$ axis, and then measured the values of $M_1^z$ and $M_2^z$. In order to also assess the behaviour of $\v{M}^x$ and $\v{M}^y$ under rotations around $x$ and $y$, respectively, we introduced an additional $\pi/2$ rotation just before the measurement, effectively translating our measurement axis (previously aligned with $z$) into $x$ or $y$. Furthermore, we have performed this experiment not only for the initial quantum protractor state, but also for three other arbitrarily selected states of a spin-1 system. This way we can demonstrate different performance of these states for the considered metrological task (see Fig.~\ref{fig:prot_result}).

\begin{figure*}[t]
    \centering
    \includegraphics[width = 0.9\textwidth]{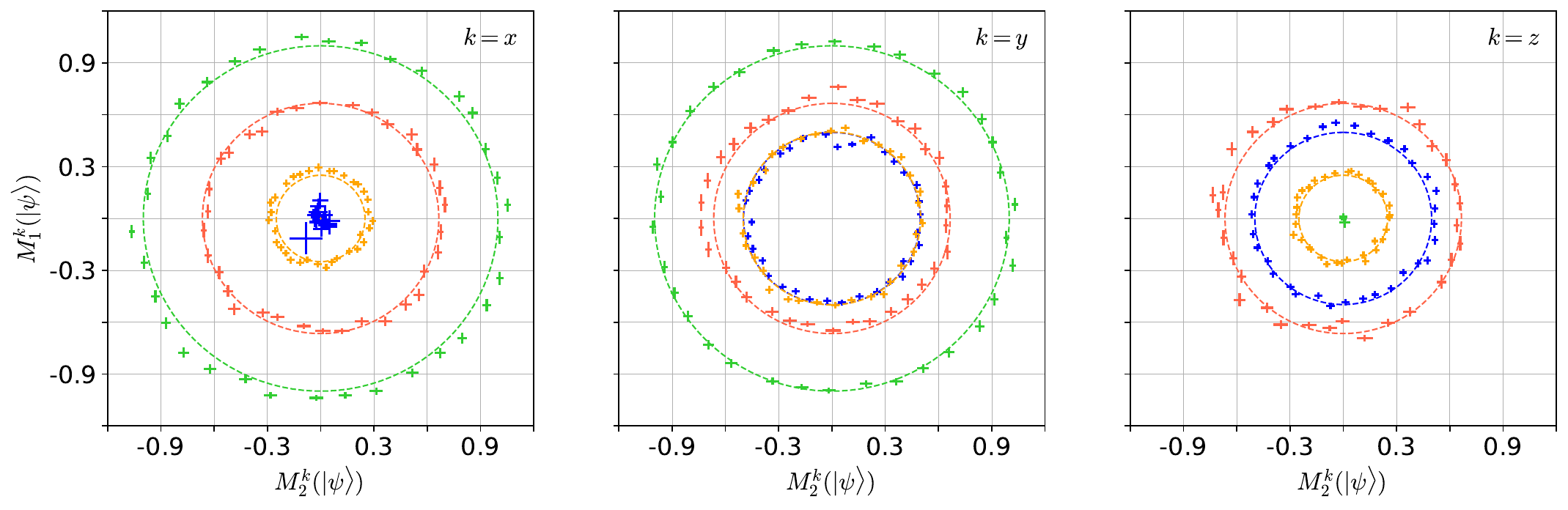}
    \caption{\textbf{Experimental estimation of rotation angles.} Results of measuring $M_1^k(\ket{\psi})$ and $M_2^k(\ket{\psi})$ for $k\in\{x,y,z\}$ (left to right) for different initial states (different colours) under rotations by $\pi l/34$ for $l\in\{0,\dots 33\}$. Points indicate the results of a single-shot experiment for a given rotation angle (error bars show propagated uncertainty of fitting to Eq.~\eqref{eq:signal}), while the dashed lines show the theoretical dependence that can be calculated using Eqs.~\eqref{eq:Mz1}-\eqref{eq:Mz2}. Colours correspond to: the perfect quantum protractor $\ket{\psi}$ -- red, $\ket{-1}_x$ -- blue, $\ket{0}_z$ -- green, and $R_y(\pi/4)\ket{-1}_x$ -- orange.
    \label{fig:prot_result}}
\end{figure*}

As can be seen in Fig.~\ref{fig:prot_result}, all experimental results match the values expected from the theoretical considerations. As predicted, for the protractor state, we observe the measurement results lying on circles of equal radius of approximately 2/3 for rotations around all three axes. We also note that, as in the case of $\ket{0}_z$ state, it is possible to achieve a higher radius (so also a higher resolution) for rotations around one or two axes, but for the price of a significant reduction for the other axis, so that according to performance measures based on Pythagorean means, perfect quantum protractors are the best states for the task.


\section{Conclusions and outlook}
\label{sec:conclusions}

In this paper we have analysed a special family of quantum states, the so-called perfect quantum protractors, that possess particular properties with respect to the rotation group. Namely, such states are simultaneously given by a uniform superposition over the eigenstates of angular momentum operators $J_x$, $J_y$ and $J_z$. As such, they maximise average uncertainty of these three observables, and they also generate three full bases under the rotations around three perpendicular axes. We also showed that these properties make perfect quantum protractors an optimal resource for a metrological task of estimating an angle of rotation around one of the \emph{a priori} unknown perpendicular axes, and demonstrated it experimentally with spin-1 systems of the atomic vapours of rubidium-87.

The optimality of perfect quantum protractors in the introduced metrological task begs for a comparison with spin anticoherent states, which have been investigated in recent years as optimal quantum states for angle measurements~\cite{goldberg2018quantum,chryssomalakos2017optimal,martin2020optimal,bouchard2017quantum,ferretti2024generating}. As already mentioned, all perfect quantum protractors are at the same time spin 1-anticoherent states, i.e., the expectation value of angular momentum operator along any axis vanishes for them. They fall short of being spin 2-anticoherent states: while the expectation value of squared angular momentum operators along $x$, $y$, and $z$ axes are maximal (and thus equal to each other), for other rotation axes they differ. Thus, while both types of states can be used for multiparameter estimation tasks, in the case of quantum protractors we cannot choose completely arbitrary rotation axes, but rather we are restricted to a distinguished set of three perpendicular ones. 

However, the biggest difference between the two types of states comes from the metrological success metric used to define them. Anticoherent states emerged from the studies of quantum states that minimize the probability that after rotation they project onto themselves, averaged uniformly over all rotation axes. One thus often focuses on sensitivity to small rotation angles and tries to maximise the change of state under rotation around any axis. Also, the nature of this metric means that in order to measure a rotation angle using spin anticoherent states, we need to perform many measurements and reconstruct the rotation angle from measurement statistics (simply because we need to estimate the overlap). On the other hand, perfect quantum protractors are defined as states that, under rotations around three perpendicular axes, transform through the largest possible number of mutually orthogonal states. As a result, they can distinguish between all rotation angles, not being limited to small rotations. Moreover, they can be used for single-shot measurements of rotation angles $\{2\pi k/(2j+1)\}_{k=0}^{2j}$, i.e., one can unambiguously distinguish between a set of such rotation angles using a single measurement, without the need of collecting measurement statistics. In Appendix~\ref{app:anticoherent}, we explicitly illustrate these differences using anticoherent and protractor states for a spin-3 system.

Since we were only able to analyse the problem of perfect quantum protractors for systems with total angular momentum $j<4$ (proving that perfect quantum protractors exist when $j\in\{1,3/2,3\}$ and numerically verifying it for $j=7/2$), the main open problem is the existence of perfect quantum protractors for all values of $j\geq 4$. It is clear that direct algebraic methods that we have used here become less and less handy for higher angular momenta, and so it seems that more general methods employing the representation theory of Lie groups could be helpful. Alternatively, smarter numerical methods could also be developed.

More abstractly, one can also generalise the studied problem to arbitrary Lie groups (e.g. SU(3)) with generators $\{g_1,\dots,g_n\}$. The question then would be to find a state $\ket{\psi}$ that transforms under each $e^{ig_kt}$ in such a way that it generates an orthogonal basis. This problem is, in a way, similar to the problem of generating symmetric informationally complete quantum measurements (SIC-POVMs)~\cite{zauner2011quantum,renes2004symmetric} via the group action on the fiducial state. Here, we are also looking for a fiducial state $\ket{\psi}$ that under the action of $n$ subgroups (instead of the whole group) generates orthogonal bases (instead of SIC-POVMs). Alternatively, one could also investigate whether quantum protractors can, under rotations, generate bases of extreme spin coherence~\cite{rudzinski2024orthonormal}.

Another mathematical perspective on the problem investigated here can be obtained by focusing on the uniform superposition property of perfect quantum protractors. Let us recall that in Refs.~\cite{korzekwa2014operational,idel2015sinkhorn} it was shown that for any two observables, $A$ and $B$, there always exist pure quantum states that are simultaneously a uniform superposition over the eigenstates of both $A$ and $B$. Here, we showed that for three particularly important observables, $J_x$, $J_y$ and $J_z$, there can also exist states that have this uniform property in three different bases. More generally, one can then investigate under what conditions on observables $A_1,\dots A_n$, it is possible to find states that are a simultaneous uniform superposition with respect to all $n$ eigenbases. For example, in prime power dimensions~$d$, where it is known that $d+1$ mutually unbiased bases exist~\cite{wootters1989optimal}, one can always find states that are a uniform superposition in $d$ different bases. Of course, these kind of results would have direct consequences for certainty relations, as discussed in this paper.

Finally, on a more applicational side, it would be interesting to investigate whether perfect quantum protractors may be useful beyond simple metrological tasks described here. For example, one could try to define an entangled generalisation of quantum protractors, in order to obtain the analogue of the phase estimation improvement from the standard quantum limit to the Heisenberg limit~\cite{giovannetti2004quantum}. In fact, the results presented in Sec.~\ref{sec:entanglement} of this paper already prove that entanglement is necessary to create the known perfect quantum protractors for $j\in\{1,3/2,3\}$ while combining multiple subsystems. The question, however, remains whether there exists some general bound quantitatively linking the amount of entanglement between subsystems and their ability to act as a perfect quantum protractor.


\section*{Acknowledgements} 

KK would like to thank Cristina C\^{i}rstoiu, Rafa{\l} Demkowicz-Dobrza{\'n}ski, Iman Marvian and Karol \.{Z}yczkowski for useful comments and discussions. KK acknowledges financial support by the Foundation for Polish Science through TEAM-NET project (contract no. POIR.04.04.00-00-17C1/18-00). This work was also supported by the National Science Centre, Poland within the Sonata Bis programme (Grant No. 2019/34/E/ST2/00440). MK would like to acknowledge the support from the Excellence Initiative – Research University of the Jagiellonian University in Kraków.


\appendix


\section{Angular momentum operators and eigenstates}
\label{app:angular}

With the matrix representation of the spin-$j$ angular-momentum operators, given in the eigenbasis of~$J_z$, i.e., when
\begin{equation}
    (J_z)_{kl}=(j+1-k)\delta_{kl},
\end{equation}
for $k,l\in\{1,\dots,2j+1\}$ and $\delta_{kl}$ denoting the Kronecker delta, we have
\begin{subequations}
\begin{align}
   \!\! (J_x)_{kl}&=\frac{1}{2}\left(\delta_{l,l+1}\!+\!\delta_{k+1,l}\right) \sqrt{(j\!+\!1)(k\!+\!l\!-\!1)\!-\!kl},\\
   \!\! (J_y)_{kl}&=\frac{i}{2}\left(\delta_{k,l+1}\!-\!\delta_{k+1,l}\right) \sqrt{(j\!+\!1)(k\!+\!l\!-\!1)\!-\!kl}.\!
\end{align}
\end{subequations}
Below, we provide several explicit matrix representations for $J_x$ and $J_y$, together with their eigenstates represented in the $J_z$ eigenbasis. To shorten the notation, we will omit the $z$ subscript for the eigenstates of $J_z$ (e.g., $\ket{\uparrow}$ should be read as $\ket{\uparrow}_z$).

\subsection*[Spin-1/2]{\texorpdfstring{$j=1/2$}{Spin-1/2}}

\begin{align}
    &J_x=\frac{1}{2}\begin{pmatrix}
    0 & 1\\
    1 & 0
    \end{pmatrix},\quad
    J_y=\frac{1}{2}\begin{pmatrix}
    0 & -i\\
    i & 0
    \end{pmatrix},
\end{align}

\begin{subequations}
\begin{align}
    &\ket{\uparrow}_x=\frac{\ket{\uparrow}+\ket{\downarrow}}{\sqrt{2}},\quad
    \ket{\downarrow}_x=\frac{\ket{\uparrow}-\ket{\downarrow}}{\sqrt{2}},\\
    &\ket{\uparrow}_y=\frac{\ket{\uparrow}+i\ket{\downarrow}}{\sqrt{2}},\quad
    \ket{\downarrow}_y=\frac{\ket{\uparrow}-i\ket{\downarrow}}{\sqrt{2}}.
\end{align}
\end{subequations}

\subsection*[Spin-1]{\texorpdfstring{$j=1$}{Spin-1}}

\begin{align}
    &J_x=\frac{1}{\sqrt{2}}\begingroup
\setlength\arraycolsep{3pt}\begin{pmatrix}
    0 & 1 & 0\\
    1 & 0 & 1\\
    0 & 1 & 0
    \end{pmatrix},\quad
    J_y=\frac{i}{\sqrt{2}}\begin{pmatrix}
    0 & -1 & 0\\
    1 & 0 & -1\\
    0 & 1 & 0
    \end{pmatrix},
    \endgroup
\end{align}

\begin{subequations}
\begin{align}
    \ket{\uparrow}_x&=\frac{1}{2}(\ket{\uparrow}+\sqrt{2}\ket{0}+\ket{\downarrow}),\\
    \ket{0}_x&=\frac{1}{\sqrt{2}}(-\ket{\uparrow}+\ket{\downarrow}),\\
    \ket{\downarrow}_x&=\frac{1}{2}(\ket{\uparrow}-\sqrt{2}\ket{0}+\ket{\downarrow}),
\end{align}
\end{subequations}

\begin{subequations}
\begin{align}
    \ket{\uparrow}_y&=\frac{1}{2}(\ket{\uparrow}+i\sqrt{2}\ket{0}-\ket{\downarrow}),\\
    \ket{0}_y&=\frac{1}{\sqrt{2}}(\ket{\uparrow}+\ket{\downarrow}),\\
    \ket{\downarrow}_y&=\frac{1}{2}(-\ket{\uparrow}+i\sqrt{2}\ket{0}+\ket{\downarrow}).
\end{align}
\end{subequations}

\subsection*[Spin-3/2]{\texorpdfstring{$j=3/2$}{Spin-3/2}}

\begin{align}
    J_x=\frac{1}{2}&\begin{pmatrix}
    0 & \sqrt{3} & 0 & 0\\
    \sqrt{3} & 0 & 2 & 0\\
    0 & 2 & 0 & \sqrt{3}\\
    0 & 0 & \sqrt{3} & 0
    \end{pmatrix},\\[1em]
    J_y=\frac{i}{2}&\begin{pmatrix}
    0 & -\sqrt{3} & 0 & 0\\
    \sqrt{3} & 0 & -2 & 0\\
    0 & 2 & 0 & -\sqrt{3}\\
    0 & 0 & \sqrt{3} & 0
    \end{pmatrix},
\end{align}

\begin{subequations}
    \begin{align}
        \ket{\Uparrow}_x=&\frac{1}{2\sqrt{2}}\left(\ket{\Uparrow}+\sqrt{3}\ket{\uparrow}+\sqrt{3}\ket{\downarrow}+\ket{\Downarrow}\right),\\
        \ket{\uparrow}_x=&\frac{1}{2\sqrt{2}}\left(-\sqrt{3}\ket{\Uparrow}-\ket{\uparrow}+\ket{\downarrow}+\sqrt{3}\ket{\Downarrow}\right),\\
        \ket{\downarrow}_x=&\frac{1}{2\sqrt{2}}\left(\sqrt{3}\ket{\Uparrow}-\ket{\uparrow}-\ket{\downarrow}+\sqrt{3}\ket{\Downarrow}\right),\\
        \ket{\Downarrow}_x=&\frac{1}{2\sqrt{2}}\left(-\ket{\Uparrow}+\sqrt{3}\ket{\uparrow}-\sqrt{3}\ket{\downarrow}+\ket{\Downarrow}\right),
    \end{align}
\end{subequations}

\begin{subequations}
    \begin{align}
        \ket{\Uparrow}_y=&\frac{1}{2\sqrt{2}}\left(\ket{\Uparrow}+i\sqrt{3}\ket{\uparrow}-\sqrt{3}\ket{\downarrow}-i\ket{\Downarrow}\right),\\
        \ket{\uparrow}_y=&\frac{1}{2\sqrt{2}}\left(i\sqrt{3}\ket{\Uparrow}-\ket{\uparrow}+i\ket{\downarrow}-\sqrt{3}\ket{\Downarrow}\right),\\
        \ket{\downarrow}_y=&\frac{1}{2\sqrt{2}}\left(-\sqrt{3}\ket{\Uparrow}+i\ket{\uparrow}-\ket{\downarrow}+i\sqrt{3}\ket{\Downarrow}\right),\\
        \ket{\Downarrow}_y=&\frac{1}{2\sqrt{2}}\left(-i\ket{\Uparrow}-\sqrt{3}\ket{\uparrow}+i\sqrt{3}\ket{\downarrow}+\ket{\Downarrow}\right).
    \end{align}
\end{subequations}

\subsection*[Spin-2]{\texorpdfstring{$j=2$}{Spin-2}}
\begingroup\setlength\arraycolsep{3pt}
\begin{align}
    J_x=&\begin{pmatrix}
    0 & 1 & 0 & 0 & 0\\
    1 & 0 &\sqrt{\frac{3}{2}} & 0 &0\\
    0 & \sqrt{\frac{3}{2}} & 0 & \sqrt{\frac{3}{2}} & 0\\
    0 & 0 &\sqrt{\frac{3}{2}} & 0 & 1\\
    0&0&0&1&0
    \end{pmatrix},\\[1em]
    J_y=i&\begin{pmatrix}
    0 & -1 & 0 & 0 & 0\\
    1 & 0 &-\sqrt{\frac{3}{2}} & 0 &0\\
    0 & \sqrt{\frac{3}{2}} & 0 & -\sqrt{\frac{3}{2}} & 0\\
    0 & 0 &\sqrt{\frac{3}{2}} & 0 & -1\\
    0&0&0&1&0
    \end{pmatrix},
\end{align}
\endgroup
\begin{subequations}
    \begin{align}
        \ket{\Uparrow}_x=&\frac{1}{4}\left(\ket{\Uparrow}+2\ket{\uparrow}+\sqrt{6}\ket{0}+2\ket{\downarrow}+\ket{\Downarrow}\right),\\
        \ket{\uparrow}_x=&\frac{1}{2}\left(-\ket{\Uparrow}+\ket{\uparrow}-\ket{\downarrow}+\ket{\Downarrow}\right),\\
        \ket{0}_x=&\frac{1}{2}\left(\sqrt{\frac{3}{2}}\ket{\Uparrow}-\ket{0}+\sqrt{\frac{3}{2}}\ket{\Downarrow}\right),\\
        \ket{\downarrow}_x=&\frac{1}{2}\left(-\ket{\Uparrow}-\ket{\uparrow}+\ket{\downarrow}+\ket{\Downarrow}\right),\\
        \ket{\Downarrow}_x=&\frac{1}{4}\left(\ket{\Uparrow}-2\ket{\uparrow}+\sqrt{6}\ket{0}-2\ket{\downarrow}+\ket{\Downarrow}\right),
    \end{align}
\end{subequations}
\vspace{-20pt}
\begin{subequations}
    \begin{align}
        \ket{\Uparrow}_y=&\frac{1}{4}\left(\ket{\Uparrow}+2i\ket{\uparrow}-\sqrt{6}\ket{0}-2i\ket{\downarrow}+\ket{\Downarrow}\right),\\
        \ket{\uparrow}_y=&\frac{1}{2}\left(-\ket{\Uparrow}-i\ket{\uparrow}-i\ket{\downarrow}+\ket{\Downarrow}\right),\\
        \ket{0}_y=&\frac{1}{2}\left(\sqrt{\frac{3}{2}}\ket{\Uparrow}+\ket{0}+\sqrt{\frac{3}{2}}\ket{\Downarrow}\right),\\
        \ket{\downarrow}_y=&\frac{1}{2}\left(-\ket{\Uparrow}+i\ket{\uparrow}+i\ket{\downarrow}+\ket{\Downarrow}\right),\\
        \ket{\Downarrow}_y=&\frac{1}{4}\left(\ket{\Uparrow}-2i\ket{\uparrow}-\sqrt{6}\ket{0}+2i\ket{\downarrow}+\ket{\Downarrow}\right).
    \end{align}
\end{subequations}

\subsection*[Spin-5/2]{\texorpdfstring{$j=5/2$}{Spin-5/2}}
\begin{align}
    J_x=\frac{1}{2}&\begingroup
\setlength\arraycolsep{3pt}\begin{pmatrix}
    0 & \sqrt{5} & 0 & 0 & 0 & 0\\
    \sqrt{5} & 0 & 2\sqrt{2} & 0 & 0 & 0\\
    0 & 2\sqrt{2} & 0 & 3 & 0 & 0\\
    0 & 0 & 3 & 0 & 2\sqrt{2} & 0\\
    0 & 0 & 0 & 2\sqrt{2} & 0 & \sqrt{5}\\
    0 & 0 & 0 & 0 & \sqrt{5} & 0
    \end{pmatrix},\endgroup\\[1em]
    J_y=\frac{i}{2}&\begingroup
\setlength\arraycolsep{3pt}\begin{pmatrix}
   0 & -\sqrt{5} & 0 & 0 & 0 & 0\\
    \sqrt{5} & 0 & -2\sqrt{2} & 0 & 0 & 0\\
    0 & 2\sqrt{2} & 0 & -3 & 0 & 0\\
    0 & 0 & 3 & 0 & -2\sqrt{2} & 0\\
    0 & 0 & 0 & 2\sqrt{2} & 0 & -\sqrt{5}\\
    0 & 0 & 0 & 0 & \sqrt{5} & 0
    \end{pmatrix},
    \endgroup
\end{align}

\begin{subequations}
    \begin{align}
        \ket{\Uuparrow}_x=&\frac{1}{4\sqrt{2}}\left(\ket{\Uuparrow}+\sqrt{5}\ket{\Uparrow}+\sqrt{10}\ket{\uparrow}\right.\\ \nonumber&\quad\quad\left.+\sqrt{10}\ket{\downarrow}+\sqrt{5}\ket{\Downarrow}+\ket{\Ddownarrow}\right),\\
        \ket{\Uparrow}_x=&\frac{1}{4\sqrt{2}}\left(-\sqrt{5}\ket{\Uuparrow} -3 \ket{\Uparrow}-\sqrt{2}\ket{\uparrow}\right.\\ \nonumber&\quad\quad\left. +\sqrt{2}\ket{\downarrow}+ 3\ket{\Downarrow}+ \sqrt{5}\ket{\Ddownarrow}\right),\\
        \ket{\uparrow}_x=&\frac{1}{4}\left(\sqrt{5}\ket{\Uuparrow}+\ket{\Uparrow}-\sqrt{2}\ket{\uparrow}\right.\\ \nonumber&\quad\left.-\sqrt{2}\ket{\downarrow}+\ket{\Downarrow}+\sqrt{5}\ket{\Ddownarrow}\right),
        \end{align}
        \vspace{9pt}
        \begin{align}
        \ket{\downarrow}_x=&\frac{1}{4}\left(-\sqrt{5}\ket{\Uuparrow}+\ket{\Uparrow}+\sqrt{2}\ket{\uparrow}\right.\\ \nonumber&\quad\left.-\sqrt{2}\ket{\downarrow}-\ket{\Downarrow}+\sqrt{5}\ket{\Ddownarrow}\right),\\
        \ket{\Downarrow}_x=&\frac{1}{4\sqrt{2}}\left(\sqrt{5}\ket{\Uuparrow} -3 \ket{\Uparrow}+\sqrt{2}\ket{\uparrow}\right.\\ \nonumber&\quad\quad\left. +\sqrt{2}\ket{\downarrow}- 3\ket{\Downarrow}+ \sqrt{5}\ket{\Ddownarrow}\right)\\
        \ket{\Ddownarrow}_x=&\frac{1}{4\sqrt{2}}\left(-\ket{\Uuparrow}+\sqrt{5}\ket{\Uparrow}-\sqrt{10}\ket{\uparrow}\right.\\ \nonumber&\quad\quad\left.+\sqrt{10}\ket{\downarrow}-\sqrt{5}\ket{\Downarrow}+\ket{\Ddownarrow}\right),
    \end{align}
\end{subequations}

\begin{subequations}
    \begin{align}
        \ket{\Uuparrow}_y=&\frac{1}{4\sqrt{2}}\left(-i\ket{\Uuparrow}+\sqrt{5}\ket{\Uparrow}+i\sqrt{10}\ket{\uparrow}\right.\\ \nonumber&\quad\quad\left.-\sqrt{10}\ket{\downarrow}-i\sqrt{5}\ket{\Downarrow}+\ket{\Ddownarrow}\right),\\
        \ket{\Uparrow}_y=&\frac{1}{4\sqrt{2}}\left(i\sqrt{5}\ket{\Uuparrow} -3 \ket{\Uparrow}-i\sqrt{2}\ket{\uparrow}\right.\\ \nonumber&\quad\quad\left. -\sqrt{2}\ket{\downarrow}- 3i\ket{\Downarrow}+ \sqrt{5}\ket{\Ddownarrow}\right),\\
        \ket{\uparrow}_y=&\frac{1}{4}\left(-i\sqrt{5}\ket{\Uuparrow}+\ket{\Uparrow}-i\sqrt{2}\ket{\uparrow}\right.\\ \nonumber&\quad\left.+\sqrt{2}\ket{\downarrow}-i\ket{\Downarrow}+\sqrt{5}\ket{\Ddownarrow}\right),\\
        \ket{\downarrow}_y=&\frac{1}{4}\left(i\sqrt{5}\ket{\Uuparrow}+\ket{\Uparrow}+i\sqrt{2}\ket{\uparrow}\right.\\ \nonumber&\quad\left.+\sqrt{2}\ket{\downarrow}+i\ket{\Downarrow}+\sqrt{5}\ket{\Ddownarrow}\right),\\
        \ket{\Downarrow}_y=&\frac{1}{4\sqrt{2}}\left(-i\sqrt{5}\ket{\Uuparrow} -3 \ket{\Uparrow}+i\sqrt{2}\ket{\uparrow}\right.\\ \nonumber&\quad\quad\left. -\sqrt{2}\ket{\downarrow}+3i\ket{\Downarrow}+ \sqrt{5}\ket{\Ddownarrow}\right),\\
        \ket{\Ddownarrow}_y=&\frac{1}{4\sqrt{2}}\left(i\ket{\Uuparrow}+\sqrt{5}\ket{\Uparrow}-i\sqrt{10}\ket{\uparrow}\right.\\ \nonumber&\quad\quad\left.-\sqrt{10}\ket{\downarrow}+i\sqrt{5}\ket{\Downarrow}+\ket{\Ddownarrow}\right).
    \end{align}
\end{subequations}

\subsection*[Spin-3]{\texorpdfstring{$j=3$}{Spin-3}}

\begingroup\setlength\arraycolsep{2pt}
\begin{align}
    J_x=\frac{1}{2}&\begin{pmatrix}
    0 & \sqrt{6} & 0 & 0 & 0 & 0 & 0\\
    \sqrt{6} & 0 & \sqrt{10} & 0 & 0 & 0 & 0\\
    0 & \sqrt{10} & 0 & 2\sqrt{3} & 0 & 0 & 0\\
    0 & 0 & 2\sqrt{3} & 0 & 2\sqrt{3} & 0& 0\\
    0 & 0 & 0 & 2\sqrt{3} & 0 & \sqrt{10}& 0\\
    0 & 0 & 0 & 0 & \sqrt{10} & 0 & \sqrt{6}\\
    0& 0& 0& 0& 0 & \sqrt{6} & 0
    \end{pmatrix},\\[0.93em]
    J_y=\frac{i}{2}&\begin{pmatrix}
  0 & -\sqrt{6} & 0 & 0 & 0 & 0 & 0\\
    \sqrt{6} & 0 & -\sqrt{10} & 0 & 0 & 0 & 0\\
    0 & \sqrt{10} & 0 & -2\sqrt{3} & 0 & 0 & 0\\
    0 & 0 & 2\sqrt{3} & 0 & -2\sqrt{3} & 0& 0\\
    0 & 0 & 0 & 2\sqrt{3} & 0 & -\sqrt{10}& 0\\
    0 & 0 & 0 & 0 & \sqrt{10} & 0 & -\sqrt{6}\\
    0& 0& 0& 0& 0 & \sqrt{6} & 0
    \end{pmatrix},
\end{align}
\endgroup
\vspace{-15pt}
\begin{subequations}
    \begin{align}
        \ket{\Uuparrow}_x=&\frac{1}{8}\left(\ket{\Uuparrow}+\sqrt{6}\ket{\Uparrow}+\sqrt{15}\ket{\uparrow}+2\sqrt{5}\ket{0}\right.\\ \nonumber&\quad\left.+\sqrt{15}\ket{\downarrow}+\sqrt{6}\ket{\Downarrow}+\ket{\Ddownarrow}\right),\\
        \ket{\Uparrow}_x=&\frac{1}{4}\left(-\sqrt{\frac{3}{2}}\ket{\Uuparrow} -2 \ket{\Uparrow}-\sqrt{\frac{5}{2}}\ket{\uparrow}\right.\\ \nonumber&\quad\left. +\sqrt{\frac{5}{2}}\ket{\downarrow}+ 2\ket{\Downarrow}+ \sqrt{\frac{3}{2}}\ket{\Ddownarrow}\right),\\
        \ket{\uparrow}_x=&\frac{1}{8}\left(\sqrt{15}\ket{\Uuparrow}+\sqrt{10}\ket{\Uparrow}-\ket{\uparrow}-2\sqrt{3}\ket{0}\right.\\ \nonumber&\quad\left.-\ket{\downarrow}+\sqrt{10}\ket{\Downarrow}+\sqrt{15}\ket{\Ddownarrow}\right),\\
        \ket{0}_x=&\frac{1}{4}\left(-\sqrt{5}\ket{\Uuparrow}+ \sqrt{3}\ket{\uparrow} -\sqrt{3}\ket{\downarrow}+ \sqrt{5}\ket{\Ddownarrow}\right),\\
        \ket{\downarrow}_x=&\frac{1}{8}\left(\sqrt{15}\ket{\Uuparrow}-\sqrt{10}\ket{\Uparrow}-\ket{\uparrow}+2\sqrt{3}\ket{0}\right.\\ \nonumber&\quad\left.-\ket{\downarrow}-\sqrt{10}\ket{\Downarrow}+\sqrt{15}\ket{\Ddownarrow}\right),\\
        \ket{\Downarrow}_x=&\frac{1}{4}\left(-\sqrt{\frac{3}{2}}\ket{\Uuparrow} +2 \ket{\Uparrow}-\sqrt{\frac{5}{2}}\ket{\uparrow}\right.\\ \nonumber&\quad\left. +\sqrt{\frac{5}{2}}\ket{\downarrow}- 2\ket{\Downarrow}+ \sqrt{\frac{3}{2}}\ket{\Ddownarrow}\right),\\
        \ket{\Ddownarrow}_x=&\frac{1}{8}\left(\ket{\Uuparrow}-\sqrt{6}\ket{\Uparrow}+\sqrt{15}\ket{\uparrow}-2\sqrt{5}\ket{0}\right.\\ \nonumber&\quad\left.+\sqrt{15}\ket{\downarrow}-\sqrt{6}\ket{\Downarrow}+\ket{\Ddownarrow}\right),
    \end{align}
\end{subequations}

\begin{subequations}
    \begin{align}
        \ket{\Uuparrow}_y=&\frac{1}{8}\left(-\ket{\Uuparrow}-i\sqrt{6}\ket{\Uparrow}+\sqrt{15}\ket{\uparrow}+2i\sqrt{5}\ket{0}\right.\\ \nonumber&\quad\left.-\sqrt{15}\ket{\downarrow}-i\sqrt{6}\ket{\Downarrow}+\ket{\Ddownarrow}\right),\\
        \ket{\Uparrow}_y=&\frac{1}{4}\left(\sqrt{\frac{3}{2}}\ket{\Uuparrow} +2i \ket{\Uparrow}-\sqrt{\frac{5}{2}}\ket{\uparrow}\right.\\ \nonumber&\quad\left. -\sqrt{\frac{5}{2}}\ket{\downarrow}- 2i\ket{\Downarrow}+ \sqrt{\frac{3}{2}}\ket{\Ddownarrow}\right),\\
        \ket{\uparrow}_y=&\frac{1}{8}\left(-\sqrt{15}\ket{\Uuparrow}-i\sqrt{10}\ket{\Uparrow}-\ket{\uparrow}-2i\sqrt{3}\ket{0}\right.\\ \nonumber&\quad\left.+\ket{\downarrow}-i\sqrt{10}\ket{\Downarrow}+\sqrt{15}\ket{\Ddownarrow}\right),\\
        \ket{0}_y=&\frac{1}{4}\left(\sqrt{5}\ket{\Uuparrow}+ \sqrt{3}\ket{\uparrow} +\sqrt{3}\ket{\downarrow}+ \sqrt{5}\ket{\Ddownarrow}\right),\\
        \ket{\downarrow}_y=&\frac{1}{8}\left(-\sqrt{15}\ket{\Uuparrow}+i\sqrt{10}\ket{\Uparrow}-\ket{\uparrow}+2i\sqrt{3}\ket{0}\right.\\ \nonumber&\quad\left.+\ket{\downarrow}+i\sqrt{10}\ket{\Downarrow}+\sqrt{15}\ket{\Ddownarrow}\right),\\
        \ket{\Downarrow}_y=&\frac{1}{4}\left(\sqrt{\frac{3}{2}}\ket{\Uuparrow} -2i \ket{\Uparrow}-\sqrt{\frac{5}{2}}\ket{\uparrow}\right.\\ \nonumber&\quad\left. -\sqrt{\frac{5}{2}}\ket{\downarrow}+2i\ket{\Downarrow}+ \sqrt{\frac{3}{2}}\ket{\Ddownarrow}\right),\\
        \ket{\Ddownarrow}_y=&\frac{1}{8}\left(-\ket{\Uuparrow}+i\sqrt{6}\ket{\Uparrow}+\sqrt{15}\ket{\uparrow}-2i\sqrt{5}\ket{0}\right.\\ \nonumber&\quad\left.-\sqrt{15}\ket{\downarrow}+i\sqrt{6}\ket{\Downarrow}+\ket{\Ddownarrow}\right).
    \end{align}
\end{subequations}

\subsection*[Spin-7/2]{\texorpdfstring{$j=7/2$}{Spin-7/2}}
\vspace{-5pt}
\begingroup\setlength\arraycolsep{2pt}
\begin{align}  
    \!J_x\!=\!\frac{1}{2}&\begin{pmatrix}
    0 & \sqrt{7} & 0 & 0 & 0 & 0 & 0 & 0\\
    \sqrt{7} & 0 & 2\sqrt{3} & 0 & 0 & 0 & 0 & 0\\
    0 & 2\sqrt{3} & 0 & \sqrt{15} & 0 & 0 & 0 & 0\\
    0 & 0 & \sqrt{15} & 0 & 4 & 0 & 0 & 0\\
    0 & 0 & 0 & 4 & 0 & \sqrt{15} & 0 & 0\\
    0 & 0 & 0 & 0 & \sqrt{15} & 0 & 2\sqrt{3} & 0\\
    0 & 0 & 0 & 0 & 0 & 2\sqrt{3} & 0 &  \sqrt{7}\\
    0 & 0 & 0 & 0 & 0 & 0 & \sqrt{7} & 0
    \end{pmatrix}\!,\!\\
    \!J_y\!=\!\frac{i}{2}&\begin{pmatrix}
    0 & -\sqrt{7} & 0 & 0 & 0 & 0 & 0 & 0\\
    \sqrt{7} & 0 & -2\sqrt{3} & 0 & 0 & 0 & 0 & 0\\
    0 & 2\sqrt{3} & 0 & -\sqrt{15} & 0 & 0 & 0 & 0\\
    0 & 0 & \sqrt{15} & 0 & -4 & 0 & 0 & 0\\
    0 & 0 & 0 & 4 & 0 & -\sqrt{15} & 0 & 0\\
    0 & 0 & 0 & 0 & \sqrt{15} & 0 & -2\sqrt{3} & 0\\
    0 & 0 & 0 & 0 & 0 & 2\sqrt{3} & 0 &  -\sqrt{7}\\
    0 & 0 & 0 & 0 & 0 & 0 & \sqrt{7} & 0
    \end{pmatrix}\!,\!
\end{align}
\endgroup
\vspace{-23pt}
\begin{subequations}
    \begin{align}
        \ket{\Uuparrow}_x=&\frac{1}{8\sqrt{2}}\left(\ket{\Uuparrow}+\sqrt{7}\ket{\Uparrow}+\sqrt{21}\ket{\twoheaduparrow}+\sqrt{35}\ket{\uparrow}\right.\\ \nonumber&\quad\quad\left.+\sqrt{35}\ket{\downarrow}+\sqrt{21}\ket{\twoheaddownarrow}+\sqrt{7}\ket{\Downarrow}+\ket{\Ddownarrow}\right),\\
        \ket{\Uparrow}_x=&\frac{1}{8\sqrt{2}}\left(-\sqrt{7}\ket{\Uuparrow} -5\ket{\Uparrow} -3 \sqrt{3}\ket{\twoheaduparrow}\right.\\ \nonumber&\quad\quad\left. -\sqrt{5}\ket{\uparrow}+ \sqrt{5}\ket{\downarrow}+ 3 \sqrt{3}\ket{\twoheaddownarrow}\right.\\ \nonumber&\quad\quad\left.+ 5\ket{\Uparrow}+ \sqrt{7}\ket{\Ddownarrow}\right),\\
        \ket{\twoheaduparrow}_x=&\frac{1}{8\sqrt{2}}\left(\sqrt{21}\ket{\Uuparrow}+ 3 \sqrt{3}\ket{\Uparrow}+ \ket{\twoheaduparrow} -\sqrt{15}\ket{\uparrow}\right.\\ \nonumber&\quad\quad\left. -\sqrt{15}\ket{\downarrow} +\ket{\twoheaddownarrow}+ 3 \sqrt{3}\ket{\Downarrow}+\sqrt{21}\ket{\Ddownarrow}\right),\\
        \ket{\uparrow}_x=&\frac{1}{8\sqrt{2}}\left(-\sqrt{35}\ket{\Uuparrow} -\sqrt{5}\ket{\Uparrow}+ \sqrt{15}\ket{\twoheaduparrow} \right.\\ \nonumber&\quad\quad\left.+3\ket{\uparrow}-3\ket{\downarrow} -\sqrt{15}\ket{\twoheaddownarrow}\right.\\ \nonumber&\quad\quad\left.+ \sqrt{5}\ket{\Downarrow} +\sqrt{35}\ket{\Ddownarrow}\right),\\
        \ket{\downarrow}_x=&\frac{1}{8\sqrt{2}}\left(\sqrt{35}\ket{\Uuparrow} -\sqrt{5}\ket{\Uparrow}- \sqrt{15}\ket{\twoheaduparrow} +3\ket{\uparrow}\right.\\ \nonumber&\quad\quad\left.+3\ket{\downarrow} -\sqrt{15}\ket{\twoheaddownarrow}-\sqrt{5}\ket{\Downarrow} +\sqrt{35}\ket{\Ddownarrow}\right),\\
        \ket{\twoheaddownarrow}_x=&\frac{1}{8\sqrt{2}}\left(-\sqrt{21}\ket{\Uuparrow}+ 3 \sqrt{3}\ket{\Uparrow}- \ket{\twoheaduparrow} \right.\\ \nonumber&\quad\quad\left.-\sqrt{15}\ket{\uparrow} +\sqrt{15}\ket{\downarrow} +\ket{\twoheaddownarrow}\right.\\ \nonumber&\quad\quad\left.- 3 \sqrt{3}\ket{\Downarrow}+\sqrt{21}\ket{\Ddownarrow}\right),\\
        \ket{\Downarrow}_x=&\frac{1}{8\sqrt{2}}\left(\sqrt{7}\ket{\Uuparrow} -5\ket{\Uparrow} +3 \sqrt{3}\ket{\twoheaduparrow} -\sqrt{5}\ket{\uparrow}\right.\\ \nonumber&\quad\quad\left.- \sqrt{5}\ket{\downarrow}+ 3 \sqrt{3}\ket{\twoheaddownarrow}- 5\ket{\Uparrow}+\sqrt{7} \ket{\Ddownarrow}\right),\\
        \ket{\Ddownarrow}_x=&\frac{1}{8\sqrt{2}}\left(-\ket{\Uuparrow}\!+\!\sqrt{7}\ket{\Uparrow}\!-\!\sqrt{21}\ket{\twoheaduparrow}\!+\!\sqrt{35}\ket{\uparrow}\right.\\ \nonumber&\quad\quad\left.-\sqrt{35}\ket{\downarrow}+\sqrt{21}\ket{\twoheaddownarrow}-\sqrt{7}\ket{\Downarrow}+\ket{\Ddownarrow}\right),
    \end{align}
\end{subequations}
\begin{subequations}
    \begin{align}
        \ket{\Uuparrow}_y=&\frac{1}{8\sqrt{2}}\left(i\ket{\Uuparrow}-\sqrt{7}\ket{\Uparrow}-i \sqrt{21}\ket{\twoheaduparrow}\right.\\ \nonumber&\quad\quad\left.+\sqrt{35}\ket{\uparrow}+i\sqrt{35}\ket{\downarrow}-\sqrt{21}\ket{\twoheaddownarrow}\right.\\ \nonumber&\quad\quad\left.-i\sqrt{7}\ket{\Downarrow}+\ket{\Ddownarrow}\right),\\
        \ket{\Uparrow}_y=&\frac{1}{8\sqrt{2}}\left(-i\sqrt{7}\ket{\Uuparrow} +5\ket{\Uparrow} +3i \sqrt{3}\ket{\twoheaduparrow}\right.\\ \nonumber&\quad\quad\left. -\sqrt{5}\ket{\uparrow}+ \sqrt{5}i\ket{\downarrow}- 3 \sqrt{3}\ket{\twoheaddownarrow}\right.\\ \nonumber&\quad\quad\left.- 5i\ket{\Uparrow}+ \sqrt{7}\ket{\Ddownarrow}\right),\\
        \ket{\twoheaduparrow}_y=&\frac{1}{8\sqrt{2}}\left(\sqrt{21}\ket{\Uuparrow}- 3 \sqrt{3}\ket{\Uparrow}-i \ket{\twoheaduparrow} \right.\\ \nonumber&\quad\quad\left.-\sqrt{15}\ket{\uparrow} -i\sqrt{15}\ket{\downarrow} -\ket{\twoheaddownarrow}\right.\\ \nonumber&\quad\quad\left.- 3i \sqrt{3}\ket{\Downarrow}+\sqrt{21}\ket{\Ddownarrow}\right),\\
        \ket{\uparrow}_y=&\frac{1}{8\sqrt{2}}\left(-i\sqrt{35}\ket{\Uuparrow} +\sqrt{5}\ket{\Uparrow}-i \sqrt{15}\ket{\twoheaduparrow}\right.\\ \nonumber&\quad\quad\left. +3\ket{\uparrow}-3i\ket{\downarrow} +\sqrt{15}\ket{\twoheaddownarrow}\right.\\ \nonumber&\quad\quad\left.-i \sqrt{5}\ket{\Downarrow} +\sqrt{35}\ket{\Ddownarrow}\right),\\
        \ket{\downarrow}_y=&\frac{1}{8\sqrt{2}}\left(i\sqrt{35}\ket{\Uuparrow} +\sqrt{5}\ket{\Uparrow}+i \sqrt{15}\ket{\twoheaduparrow} \right.\\ \nonumber&\quad\quad\left.+3\ket{\uparrow}+3i\ket{\downarrow} +\sqrt{15}\ket{\twoheaddownarrow}\right.\\ \nonumber&\quad\quad\left.+i\sqrt{5}\ket{\Downarrow} +\sqrt{35}\ket{\Ddownarrow}\right),\\
        \ket{\twoheaddownarrow}_y=&\frac{1}{8\sqrt{2}}\left(-i\sqrt{21}\ket{\Uuparrow}- 3 \sqrt{3}\ket{\Uparrow}+i \ket{\twoheaduparrow} \right.\\ \nonumber&\quad\quad\left.-\sqrt{15}\ket{\uparrow} +i\sqrt{15}\ket{\downarrow} -\ket{\twoheaddownarrow}\right.\\ \nonumber&\quad\quad\left.+ 3i \sqrt{3}\ket{\Downarrow}+\sqrt{21}\ket{\Ddownarrow}\right),\\
        \ket{\Downarrow}_y=&\frac{1}{8\sqrt{2}}\left(i\sqrt{7}\ket{\Uuparrow} +5\ket{\Uparrow} -3i \sqrt{3}\ket{\twoheaduparrow}\right.\\ \nonumber&\quad\quad\left. -\sqrt{5}\ket{\uparrow}- i\sqrt{5}\ket{\downarrow}- 3 \sqrt{3}\ket{\twoheaddownarrow}\right.\\ \nonumber&\quad\quad\left.+ 5i\ket{\Uparrow}+\sqrt{7} \ket{\Ddownarrow}\right),\\
        \ket{\Ddownarrow}_y=&\frac{1}{8\sqrt{2}}\left(-i\ket{\Uuparrow}-\sqrt{7}\ket{\Uparrow}+i\sqrt{21}\ket{\twoheaduparrow}\right.\\ \nonumber&\quad\quad\left.+\sqrt{35}\ket{\uparrow}-i\sqrt{35}\ket{\downarrow}-\sqrt{21}\ket{\twoheaddownarrow}\right.\\ \nonumber&\quad\quad\left.+i\sqrt{7}\ket{\Downarrow}+\ket{\Ddownarrow}\right).
    \end{align}
\end{subequations}


\section{Numerical phases for spin-7/2}
\label{app:numerics}

Consider the following $\ket{\psi}\in\H_{7/2}$:
\begin{equation}
    \ket{\psi} = \frac{1}{\sqrt{8}}\sum_{m=-7/2}^{7/2} e^{i\phi_m} \ket{j,m}_z,    
\end{equation}
with
\begin{subequations}
    \begin{align}    
        \phi_{7/2} &= 0,\\
        \phi_{5/2} &= \phi_{-3/2}-\phi_{3/2}+\phi_{-5/2} +\pi,\\
        \phi_{1/2} & = \theta+\frac{\phi_{3/2}+\phi_{-3/2}+\phi_{-7/2}}{2},\\
        \phi_{-1/2} & = \theta+\frac{\phi_{3/2}+\phi_{-3/2}-\phi_{-7/2}}{2}+ \pi,
    \end{align}   
\end{subequations}
where
\begin{align}
    \theta=& -\arcsin\left(\sqrt{\frac{7}{15}}\sin\left(\phi_{-5/2}+\frac{\phi_{-3/2}-\phi_{3/2}-\phi_{-7/2}}{2}\right)\right.\nonumber\\
    &\qquad\qquad~~+\frac{\phi_{-3/2}+\phi_{3/2}+\phi_{-7/2}}{2}\Bigg).
\end{align}
Then, around the following numerical values:
\begin{subequations}
    \begin{align}
        \phi_{3/2} =& -2.30181413,\\
        \phi_{-3/2} =& -0.60467766,\\
        \phi_{-5/2} =& -1.46074545,\\
        \phi_{-7/2} =& -0.57982923,
    \end{align}
\end{subequations}
the state $\ket{\psi}$, up to numerical precision, becomes a perfect quantum protractor.


\section{Comparison with spin 2-anticoherent state for spin-3 system}
\label{app:anticoherent}

\begin{figure}[t]
    \centering
    \includegraphics[width=\columnwidth]{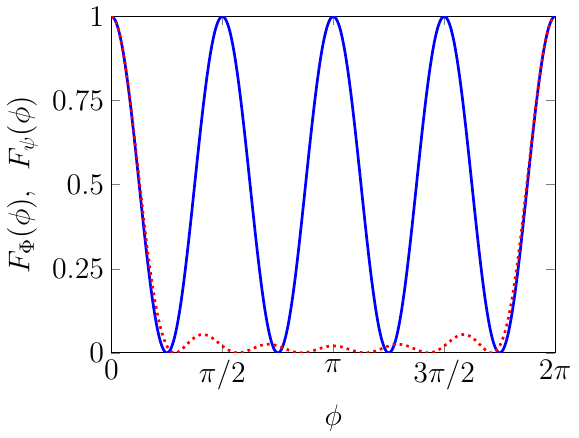}
    \caption{\label{fig:comparison}\textbf{Perfect quantum protractor versus spin 2-anticoherent state.} The overlap $F_\Phi(\phi)$ between spin 2-anticoherent state $\ket{\Phi}$ for spin-3 system and its version rotated around any axis $\hat{\v{n}}$ by angle $\phi$ (blue solid line) compared with the overlap $F_\psi(\phi)$ between perfect quantum protractor $\ket{\psi}$ for spin-3 system and its version rotated around an axis $k$ by angle $\phi$ with $k\in\{x,y,z\}$ (red dotted line).}
\end{figure}

One of the examples of the spin 2-anticoherent state of a spin-3 system is given by~\cite{goldberg2018quantum}:
\begin{equation}
    \ket{\Phi}=\frac{1}{\sqrt{2}}(\ket{3,2}_z-\ket{3,-2}_z).
\end{equation}
Its overlap with a version of itself rotated around an arbitrary axis $\hat{\v{n}}$ by an angle $\phi$ is given by
\begin{equation}
    F_{\Phi}(\phi):=|\langle \Phi | R_{\hat{\v{n}}}(\phi)|\Phi \rangle|^2 = \cos^2(2\phi),
\end{equation}
and is plotted in Fig.~\ref{fig:comparison}. While the overlap quickly drops from one to zero (signalling the sensitivity of $\ket{\Phi}$ to small angle rotations), it then oscillates just between two orthogonal states. As a result, it can only be used to distinguish between $\phi\in[0,\pi/4]$. For example, a rotation by any of the angles $\{\pi/4,3\pi/4,5\pi/4,7\pi/4\}$ gives not only the same overlap with the original state, but actually exactly the same quantum state, so there does not exist a measurement that could distinguish between those rotation angles.

On the other hand, the overlap of a perfect quantum protractor of spin-3 system, given by Eq.~\eqref{eq:state3}, with a version of itself rotated around an axis $k\in\{x,y,z\}$ by an angle $\phi$ is given by\footnote{More generally, this overlap for a perfect quantum protractor of a spin-$j$ system can be derived using Eq.~\eqref{eq:optimal1}, and is given by 
\begin{equation}
    |\langle \psi | R_{k}(\phi)|\psi \rangle|^2=\left[\frac{1}{2j+1}\left(1+2\sum_{m=1}^j \cos(m\phi)\right)\right]^2.
\end{equation}
}
\begin{align}
     F_\psi(\phi)&:=|\langle \psi | R_k(\phi)|\psi \rangle|^2 \nonumber\\
     &= \frac{1}{49}(1+2(\cos(\phi)+\cos(2\phi)+\cos(3\phi)))^2.
\end{align}
As can be seen in Fig.~\ref{fig:comparison}, it is slightly less sensitive to small angle rotations than the spin 2-anticoherent state (its overlap drops a little slower with increasing~$\phi$). However, it then rotates through seven mutually orthogonal states. Thus, performing a measurement in this basis allows one to distinguish between all angles $\phi\in[0,2\pi)$. For example, a rotation by angles $\{2\pi/7,4\pi/7,6\pi/7,8\pi/7,10\pi/7,12\pi/7\}$ not only gives a zero overlap with the initial state $\ket{\psi}$, but actually produces six more mutually orthogonal states that can be unambiguously distinguished by a single-shot measurement. 

\bibliographystyle{quantum}
\bibliography{bibliography.bib}

\end{document}